\newcommand{\url}{\texttt}}
\providecommand{\LyX}{L\kern-.1667em\lower.25em\hbox{Y}\kern-.125emX\@}
\let\SF@@footnote\footnote
\def\footnote{\ifx\protect\@typeset@protect
    \expandafter\SF@@footnote
  \else
    \expandafter\SF@gobble@opt
  \fi
}
\def\csname SF@gobble@opt \endcsname{\@ifnextchar[
  \SF@gobble@twobracket
  \@gobble
}
\edef\SF@gobble@opt{\noexpand\protect
  \expandafter\noexpand\csname SF@gobble@opt \endcsname}
\def\SF@gobble@twobracket[#1]#2{}
\newcommand{\captionpars}{\addtolength{\parindent}{\outerparindent}}
\newcommand{\noheading}{\vspace{1.5ex}\noindent}
\font\tencmmib=cmmib10  \skewchar\tencmmib='177
\font\sevencmmib=cmmib7 \skewchar\sevencmmib='177
\font\fivecmmib=cmmib5  \skewchar\fivecmmib='177
\font\tencmbsy=cmbsy10  \skewchar\tencmbsy='60
\font\sevencmbsy=cmbsy7 \skewchar\sevencmbsy='60
\font\fivecmbsy=cmbsy5  \skewchar\fivecmbsy='60
\mathchardef\biC="0\hexnumber@\cmmibfam43
\newcommand{\defn}{\textit}	   
\newcommand{\foreign}{\textit}     
\newcommand{\latin}{\foreign}	   
\newcommand{\ed}[1]{}				
\newcommand{\Ie}  {\latin{I.e.}}   
\newcommand{\ie}  {\latin{i.e.}}   
\newcommand{\etal}{\latin{et al.}} 
\newcommand{\Eg}  {\latin{E.g.}}   
\newcommand{\eg}  {\latin{e.g.}}   
\newcommand{\cf}  {\latin{cf.}}	   
\newenvironment{minitab}
  {\begin{minipage}{\textwidth}
   \begin{tabbing}}
  {\end{tabbing}
   \end{minipage}}
\newcommand{\beforetab}{\addtolength{\parindent}{\outerparindent}}
\newcommand{\aftertab}{\hrule}
\newlength{\outerparindent}
\newcommand{\rmn}[1]{{\rm #1}}  
\newcommand{\litsym}  [1] {{\tt #1}}            
\newcommand{\cxclass} [1] {{\bf #1}}            
\newcommand{\unit}        {\rmn}	        
\newcommand{\units}   [1] {\enm{\;\mathrm{#1}}} 
\newcommand{\cssym}   [1] {\enm{\mathsf{#1}}}   
\newcommand{\field}   [1] {\enm{\mathbb{#1}}}   
\newcommand{\e}      [2] {#1\!\times\!10^{#2}}                
\newcommand{\tg}         {_\rmn}                              
\newcommand{\enm}        {\ensuremath}                        
\newcommand{\funcdr} [3] {\enm{#1\colon#2\rightarrow#3}}      
\newcommand{\aprxeq}     {\enm{\mathrel{\cong}}}              
\newcommand{\about}      {\enm{{\sim}}}			      
\newcommand{\qeq}     {\stackrel{?}{=}}              
\newcommand{\zerobit} {\litsym{0}}                   
\newcommand{\onebit}  {\litsym{1}}                   
\newcommand{\hash}    {\enm{\text{\litsym{\#}}}}     
\newcommand{\bitset}  {\{\zerobit,\onebit\}}         
\newcommand{\N}	     {\field{N}} 
\renewcommand{\P}     {\cxclass{P}}       
\newcommand{\NP}      {\cxclass{NP}}      
\newcommand{\RTISP}   {\cxclass{RTISP}}	  
\newcommand{\TISP}    {\cxclass{TISP}}    
\newcommand{\RST} [2] {\enm{\RTISP(#2,#1)}} 
\newcommand{\ST}  [2] {\enm{\TISP(#2,#1)}}  
\newcommand{\Atmost}   {\enm{\mathcal{O}}}       
\newcommand{\Atleast}  {\enm{\boldsymbol\Omega}} 
\newcommand{\Exactly}  {\enm{\boldsymbol\Theta}} 
\newcommand{\Lessthan} {\enm{\mathbf{o}}}        
\newcommand{\Morethan} {\enm{\boldsymbol\omega}} 
\newcommand{\atmost}   {\precsim} 
\newcommand{\atleast}  {\succsim} 
\newcommand{\exactly}  {\asymp}     
\newcommand{\lessthan} {\prec}    
\newcommand{\morethan} {\succ}    
\newcommand{\gram}   {\unit{g}}
\newcommand{\Joule}  {\unit{J}}
\newcommand{\Kelvin} {\unit{K}}
\newcommand{\Newton} {\unit{N}}
\newcommand{\meter}  {\unit{m}}
\newcommand{\second} {\unit{s}}          
\newcommand{\cm}     {\unit{cm}}         
\newcommand{\kg}     {\unit{kg}}         
\newcommand{\kB}  {\enm{k\tg{B}}}       
\newcommand{\tapespace} {\cal C}		  
\newcommand{\klass}	{\enm{\mathfrak{C}}}	  
\newcommand{\reduc}	{\enm{F}}	          
\newcommand{\comboi}	{\enm{i}}		  
\newcommand{\kT}	{\enm{\kB\temperature}}	  
\newcommand{\machine}   {\enm{M}}           
\newcommand{\nsubi}	{\enm{n_\comboi}}   
\newcommand{\probsize}  {\enm{n\tg{in}}}    
\newcommand{\nin}       {\probsize}         
\newcommand{\prem}	{\enm{P}}	    
\newcommand{\Maxmem}      {\cssym{S}}	      
\newcommand{\SP}          {\Maxmem}           
\newcommand{\descsys}	  {\enm{s}}	      
\newcommand{\si}	  {\enm{s_\comboi}}   
\newcommand{\temperature} {\enm{T}}	      
\newcommand{\Nticks}      {\cssym{T}}	      
\newcommand{\TI}          {\Nticks}           
\begin{document}



\authorrunninghead{Michael P. Frank and M. Josephine Ammer}
\titlerunninghead{Reversible and Irreversible Space-Time Complexity Classes}






\title{Relativized Separation of Reversible and Irreversible Space-Time Complexity
Classes\thanks{
Manuscript as of May 23, 2001.  Posting to arXiv on Aug. 28, 2017.}}


\author{Michael P. Frank$^\dagger$ and M. Josephine Ammer$^\ddagger$}
\affil{$^{\dagger}$CISE Department, University of Florida; and $^{\ddagger}$EECS Department,
University of California, Berkeley}
\email{mpf@cise.ufl.edu; mjammer@eecs.berkeley.edu}

%









\dedication{Dedicated to the memory of Jon Barwise, logician and teacher}

\abstract{{\em Reversible computing} can reduce the energy dissipation
of computation, which can improve cost-efficiency in some contexts.
But the practical applicability of this method depends sensitively on
the space and time overhead required by reversible algorithms.  Time
and space complexity classes for reversible machines match
conventional ones, but we conjecture that the joint \emph{space-time}
complexity classes are different, and that a particular reduction by
Bennett minimizes the \emph{space-time product} complexity of general
reversible computations.  We provide an oracle-relativized proof of
the separation, and of a lower bound on space for linear-time
reversible simulations.  A non-oracle proof applies when a read-only
input is omitted from the space accounting.  Both constructions model
one-way function iteration, conjectured to be a problem for which
Bennett's algorithm is optimal.\\ \\
Several versions of this paper are available on the World-Wide Web at\\
\texttt{http://revcomp.info/legacy/mpf/rc/memos/M06\us oracle.html}.  }

\keywords{complexity theory, physics of computing, reversible computing, space-time complexity, incompressibility methods, relativized results, oracles, lower bounds}

\begin{article}

\contents



\raggedbottom 

\section{Introduction}
\label{s:intro}

This paper deals with \emph{reversible} models of computation, which
differ from conventional models in that all operations in a reversible
computation must be (locally) invertible.  Some discussion of the
background and motivation for such models is warranted, for the
benefit of readers who may be unfamiliar with them.

\paragraph{Importance of energy dissipation limits for future computing}
Over the history of computing, shrinking bit-device (\eg, transistor)
sizes have resulted in an energy dissipation per bit-operation that
has decreased roughly in proportion to the ever-increasing rates of
bit-operations achievable in a machine of given cost. As a result,
total power dissipation is not overwhelmingly greater today for a
machine of given cost than it was in the early days of computing, even
though the computational performance of machines has increased by many
orders of magnitude over the same period. (\Eg, Compare today's
order-100 Watt, 1 GIPS\label{p:GIPS} desktop computers with order-10
Watt, 1 IPS\label{p:IPS} hand-cranked mechanical calculators of a
hundred years ago.) So, although power requirements have always been
somewhat relevant to computer performance, they have never been the
\emph{overwhelmingly} dominant limiting factor---the total cost of the
energy needed to run a computer over its expected operational lifetime
has never been much greater than the cost of the machine itself.

However, in the future this situation could change, if continuing
improvements in manufacturing techniques, such as nano-mechanical
assembly {\cite{Drexler-92}} or molecular self-assembly\ed{[refs?]}
result in manufacturing costs per bit-device continuing to decrease
even after fundamental thermodynamic or technological lower limits on
bit-energies have been reached, and if techniques that recycle bit
energies are not applied. The Moore's law trend-line (see
fig.~\ref{f:etrend}) for bit energies reaches the absolute
thermodynamic minimum of about \( \kT \aprxeq \e {4}{-21}\units
{\Joule } \)\label{p:kB}\label{p:T}\label{p:Joule} (for
room-temperature operation) by around the year 2035, so bit energies
\emph{must} start to level off at that time, if not
earlier. Decreasing temperature would permit lower bit-energies, but
this would not by itself reduce total system power dissipation when
the cooling system is included, even if an ideal Carnot-cycle
refrigerator is used.

So, unless manufacturing costs start to level off at the same time or
earlier, or the cost of providing power and heavy-duty cooling systems
decreases rapidly, we may face the problem that although we might be
able to afford to {\emph{build}} nanocomputers with ever-increasing
numbers of bit-devices, we might not be able to {\emph{operate}} them
for very long at anywhere close to their peak performance. This
problem has been pointed out before by nanotechnology visionaries
Drexler, Merkle, Hall, and others
{\cite{Drexler-92,Merkle-93a,Hall-92-v1}}.
\begin{figure}
\resizebox*{1\textwidth}{!}{\includegraphics{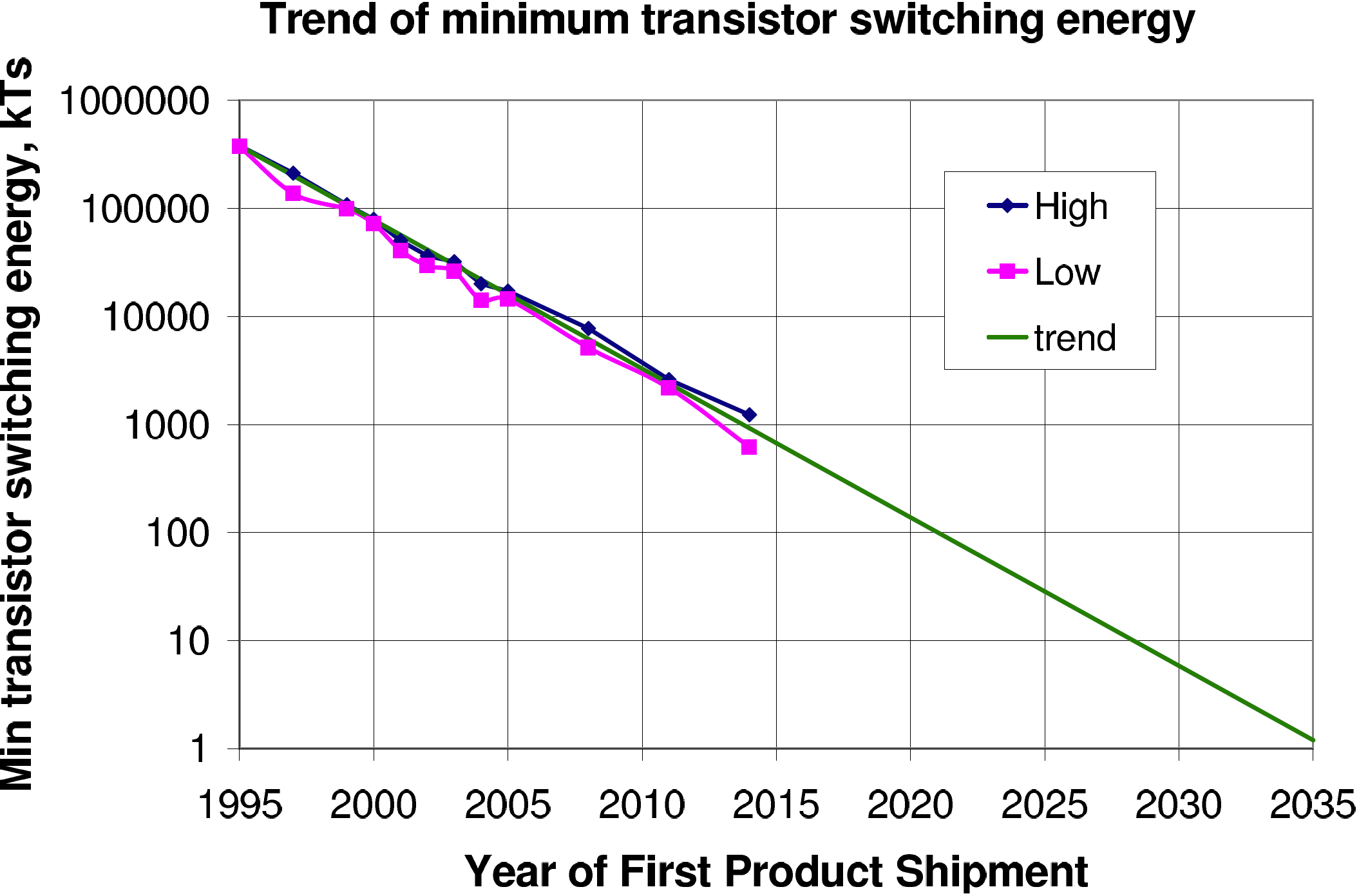}} 

\caption{\label{f:etrend}This graph shows the \protect\(
CV^{2}/2\protect \)\label{p:C}\label{p:V} energy required to charge
the gate of a minimum-sized transistor, computed from figures for
power supply voltage, minimum transistor length, and gate oxide
thickness listed in the 1994, 1997, and 1999 editions of the
International Technology Roadmap for Semiconductors (formerly the
National Semiconductor Technology Roadmap).  Values for both
high-performance and low-power design scenarios are shown. The 1999
roadmap specifies these quantities up through the year 2014. The
trendline shown extrapolates the roadmap's trends, but the roadmap has
itself historically been slightly more conservative than the
technological reality of the steady Moore's Law trends which have held
sway for more than 40 years.}
\end{figure}

Let us define the \emph{power premium} \prem\label{p:prem}\ of a
system to mean the ratio given by the expected lifetime cost of
operation of a machine's power and cooling systems, divided by the
cost of the computing hardware itself. Note that the power premium
could potentially be much greater than 1 even today in specialized
applications such as mobile computing (where there is a real, but
difficult to quantify, added cost for power in the form of
inconvenience to the user of carrying around heavy spare batteries)
and space-borne systems (where the weight of solar panels and
radiators incurs a high launch cost).  But, as various bit-energy
limits come into play in future decades, power premiums can be
expected to increase for a wider range of computing systems, if
cheaper manufacturing becomes available. Whenever \( \prem \gg 1 \),
it can make sense to change the system design in ways that incur
increased manufacturing costs in exchange for reduced power
requirements, if as the sum of the system's manufacture and operation
costs is thereby reduced.

Furthermore, under a reasonable set of physical assumptions (such as bounded
heat flow density in the cooling system) one can show \cite{Frank-etal-98,Frank-Knight-98,Frank-99b}
that for a broad class of parallel computations that require frequent intercommunication
between processing elements, asymptotically reducing energy dissipation per
operation enables strictly superior asymptotic performance even if the cost
of energy itself is negligible, since reducing heat flow enables packing devices
more densely, with shorter round-trip communication delays.

\paragraph{Adiabatic computing techniques}
For any given level of bit energies, the only way to avoid dissipating
roughly one bit-energy with each bit-operation is to use
\emph{adiabatic} (\ie, asymptotically thermodynamically reversible)
physical mechanisms to conduct the bit-operation. It is a consequence
of the second law of thermodynamics that such mechanisms are capable
of performing only \emph{logically reversible} (\ie, bijective) local
transformations of a system's digital state
\cite{Landauer-61,Leff-Rex-90,Bennett-82}. Fortunately, it turns out
that reversible operations are still computationally
universal. Several fully-reversible universal processors have already
been built \cite{Athas-etal-97b,Frank-etal-98a,Vieri-99}.

By how large a factor can adiabatic/reversible techniques reduce the
fraction of bit energies which is dissipated in practice? The precise
answer for any given bit-device technology is, as of this writing,
still unclear (although we are working on it). There are several
independent limiting factors, including the rate of energy leakage of
the bit-devices used, and the maximum efficiency (the \( Q
\)\label{p:Q} quality factor) of the energy-recovering power supplies
needed to drive adiabatic circuits.

In addition to these technological factors, there is also an important economic
limiting factor. An adiabatic reduction in the energy dissipated per operation
by a factor of \reduc\label{p:reduc} requires slowing devices down by a factor
of (at least) \reduc\ as well, so that \reduc\ times as many devices are required
to achieve a given level of raw processing performance. That is, the raw hardware
efficiency or space-time efficiency (in physical units) of an adiabatic machine
decreases in rough proportion to its increase in energy efficiency. As a result,
adiabatics cannot cost-effectively reduce power dissipation by a factor greater
than the power premium \prem, because this would raise the cost of the hardware
to be greater than the original cost of the power, thereby nullifying any economic
benefit of the decreased power consumption.

We do not yet know exactly which of these various limiting factors
will dominate in a real adiabatic computing system implementation,
because we do not yet have a sufficiently detailed adiabatic system
design, including optimized logic and power supply designs, and
accurate models of power supply dissipation and device leakage. But so
far, we know of no fundamental reasons why these technological lower
bounds on dissipation per bit-operation can not be reduced arbitrarily
through ``mere'' engineering improvements, so it seems plausible that
eventually, as power premiums increase and greater and greater degrees
of reversibility can be implemented, the hardware efficiency of larger
and larger reversible computations will come to be a dominant concern.

\paragraph{Complexity of reversible computations}
It turns out that there is an important complexity-theoretic impact on
this hardware efficiency issue. Beyond the immediate physical slowdown
by \reduc, the hardware efficiency of an adiabatic system will in
general be further decreased as a result of the possibly greater
\emph{algorithmic} space-time cost (that is, bits of state required,
times number of parallel state-update steps) for the reversible
implementation of a specific computation or sub-computation within the
machine. Many specific computations have reversible algorithms that
incur no greater space-time cost than their traditional irreversible
equivalents; some examples are mentioned in \S9.5 of
\cite{Frank-99b}. But what about other computations? In 1989, Bennett
\cite{Bennett-89} proposed a general irreversible-to-reversible
conversion technique that incurs only a modest polynomial increase in
the spacetime cost for any computation. Although originally described
as a software algorithm, it can be straightforwardly mapped to an
equivalent logic-circuit construction.

Knowing of this polynomial reduction would be enough to satisfy many
complexity theorists, but real-world concerns depend critically on
such minutae as the degree of a polynomial, or the size of its
constant coefficient. It could, for example, make the difference
between adiabatic techniques yielding significant improvements in
cost-efficiency in future generations of computing technology (or even
in near-term power-limited applications), or, in contrast, yielding no
improvements ever, depending on the absolute hardware efficiency of
substantially reversible versions of the circuit algorithms required
to implement a reasonable general-purpose microprocessor. As a result,
the outcome of a more detailed study of reversible complexity theory
is vitally relevant to planning future computing technologies.

Therefore, the question naturally arises as to whether Bennett's algorithm is
the asymptotically optimal one for conversion of arbitrary irreversible algorithms
to reversible ones, or whether a better algorithm (with still-reasonable constant
factors) might be found. If the latter were to occur, the benefits of reversible
computing might be much greater, and be realized much sooner, than would otherwise
be the case.

\paragraph{Old and new complexity conjectures}
Li and Vit\'anyi conjectured in 1996 \cite{Li-Vitanyi-96a} that
Bennett's algorithm was optimal, in terms of space complexity. Lange
{\etal}\ disproved this conjecture in \cite{Lange-etal-97}, but with a
construction that incurred exponential increases in time
complexity. However, we hypothesize that Bennett's algorithm
\emph{remains} optimal under the metric of \emph{space-time}
complexity, or space complexity \emph{multiplied by} time complexity
for a given algorithm, which is, anyway, the complexity measure that
most directly relates to the goal of maximizing hardware efficiency
(throughput per unit cost) in computer engineering.

Although our new conjecture is not yet proven, in this paper we
provide suggestive evidence in support of it, in the form of an oracle
construction that separates reversible and irreversible space-time
complexity classes, together with lower bounds which are met by
Bennett's algorithm.

\paragraph{Relevance of our relativized proof}
We are well aware that relativized constructions have no general validity in
drawing conclusions about non-relativized complexity classes, but we felt that
presenting our construction might still be useful, for several reasons:

\begin{enumerate}
\item Our oracle is designed to be as realistic as possible: Although
technically it is infeasible to physically realize exactly as defined,
it is at least computable in principle. The oracle calls are also
straightforwardly undo-able, as would be any real primitive operation
in a reversible machine.
\item The structures of the oracle, and of the language that separates
the classes, are designed to model a realistic type of real-world
computation: Namely, the iteration of an arbitrary one-way function,
such as a cryptographic hash function.  We conjecture that if one-way
functions do exist, then such iteration is a non-relativized example
of a computational problem for which a spacetime-optimal reversible
algorithm indeed results from Bennett's construction, and therefore
our lower bounds still hold without the oracle. It is conceivable that
some of the ideas or techniques used in our proof could be applied to
this one-way-function iteration scenario, to prove a separation of the
reversible and irreversible classes without resorting to an oracle,
although we have not yet seen how to do so. But, perhaps someone with
more familiarity with the theory of one-way functions would see the
trick. Therefore, we thought it worthwhile to at least present this
result to the community.
\item Finally, we feel that this entire field, which we call
``physical computing theory,'' of working with new theoretical models
of computation that are informed by increasingly-important physical
constraints such as the energy cost of bit erasure, is deserving of
more attention and we wish to help raise its visibility within the
computer science community. The increasing need for models of
computation that relate more closely to physics, and some proposed
examples of such models, are discussed in more detail in
\cite{Frank-Knight-98} and in chapters 2, 5, and 6 of
\cite{Frank-99b}.
\end{enumerate}
The results of this paper were first derived by the authors Frank and
Ammer in 1997 at MIT, and were circulated in preprint form within the
reversible computing community at that time.

\section{Table of symbols}

The following table gives the meanings of most symbols used in this document.
The third column gives the page number of the first appearance (often the definition)
of the given symbol in the text. Please note that a symbol that has different
meanings in different contexts within this paper correspondingly has multiple
entries in this table.

See also table~\ref{t:oog} on page~\pageref{t:oog} for the definitions of
our order-of-growth notations.

{\centering \begin{longtable}{cp{3.8in}c}
\hline 
Sym.&
Meaning&
p.\\
\hline 
\endhead
\emph{A}&
A particular self-reversible oracle that separates two given corresponding
\TISP\ and \RTISP\ complexity classes. Modeled as a function \( A:\tapespace \rightarrow \tapespace  \),
where \( A=A^{-1} \).&
\pageref{p:Aoracle}\\
\emph{a}&
In \S\ref{s:nonrel}, a bit-string of length \emph{b} used as an address  to
reference the memory \emph{I}.&
\pageref{p:address}\\
\emph{B}&
A particular permutation oracle that equates two given corresponding \TISP\
and \RTISP\ complexity classes.&
\pageref{p:Boracle}\\
\emph{b}&
Some arbitrary bit-string.&
\pageref{p:bitstring}\\
\emph{b}&
In \S\ref{s:nonrel}, a word length \( b\geq 0 \). Also, for \( n=b2^{b} \),
\( b(n)\equiv b \). &
\pageref{p:b-wordlen}\\
\( C \)&
Transistor gate capacitance.&
\pageref{p:C}\\
\( C_{\tau } \)&
Machine configuration of machine \( M_{i} \) resulting after \( \tau  \) steps
of execution on input \( \zerobit ^{n} \).&
\pageref{p:Ctau}\\
c&
Centi-, \( 10^{-2} \).&
\pageref{p:centi}\\
\emph{c}&
In \S\ref{s:nonrel}, a presumed constant such that reversible machine \emph{M}
decides L in no more than \( c+c\SP  \)  space and \( c+c\TI  \) time.&
\pageref{p:cnonrel}\\
\( c_{i} \)&
The constant \( c_{i}\in \N  \) appearing in the \emph{i}th pair \( (M_{i},c_{i}) \)
in an enumeration of all pairs of reversible oracle-querying machines \& such
constants.&
\pageref{p:enum}\\
\( \tapespace  \)&
The space of possible oracle tape contents.&
\pageref{p:tapespace}\\
\( \klass \) &
A variable standing for an arbitrary complexity class.&
\pageref{p:klass}\\
\emph{D}&
For a given time point \( \tau  \), a direction ({}``forwards{}'' or {}``backwards{}'')
 in which queries lie that cause most of the nodes pebbled at \( \tau  \) to be
pebbled.&
\pageref{p:D}\\
\emph{d}&
A description; a bit string that describes another bit string under some description
system \descsys.&
\pageref{p:desc}\\
\( \epsilon  \)&
An arbitrarily small positive real number; \( \epsilon \in \field {R} \)\label{p:R};
\( \epsilon >0 \); \( \epsilon \rightarrow 0 \).&
\pageref{p:epsilon}\\
\reduc&
Factor reduction in energy dissipation per operation from adiabatics.&
\pageref{p:reduc}\\
\emph{F}&
A set of functions \emph{f} having a particular asymptotic relation (\Exactly,
\Atmost, \Atleast, \Morethan, \Lessthan)&
\\
\emph{f}, \emph{g}&
In localized contexts, these are often complexity functions  \( f,g:\N \rightarrow \N  \)
mapping input lengths (in bits) to some quantity that is roughly proportional
to a complexity measure (\eg,  space or time) for worst-case inputs of the given
length.&
\pageref{t:oog}\\
\emph{f}&
In our main proof, \emph{f} is a partial \emph{successor function} \( \funcdr {f}{\bitset ^{*}}{\bitset ^{*}} \)
defining a directed graph on bit-strings that is represented by our \emph{graph
oracle} \emph{A.}&
\pageref{p:f-succfunc}\\
\gram&
Gram; unit of mass originally defined as the mass of \( 1\units {\cm ^{3}} \)\label{p:centi}
of water.&
\pageref{p:gram}\\
\emph{h}&
From given time point \( \tau  \), how many nodes are pebbled because of queries
in direction \emph{D}?&
\pageref{p:h}\\
\emph{I}&
A random-access, reversible, read-only memory of \( 2^{b} \) \emph{b}-bit words.&
\pageref{p:IROM}\\
IPS&
One instruction per second; measure of performance.&
\\
\emph{i}&
Except in localized contexts, \emph{i} in this paper means the index of one
of the possible pairs \( (M_{i},c_{i}) \) of reversible machines \& constants.&
\pageref{p:enum}\\
\emph{i}&
In \S\ref{s:nonrel}, a node index, \( 1\leq i\leq t \).&
\pageref{p:i-nodeindex}\\
\Joule&
Joule; the SI unit of energy, defined as  \( 1\units {\Newton \cdot \meter } \)\label{p:Newton}\label{p:meter}.&
\pageref{p:Joule}\\
\emph{j}&
Index of a query string, \( 1\leq j\leq t \).&
\pageref{p:j}\\
k&
Kilo-, \( 10^{3} \).&
\pageref{p:kilo}\\
\emph{k}&
Number of sublevel repetitions in Bennett's 1989 algorithm \cite{Bennett-89}.&
\pageref{p:k}\\
\emph{k}&
Index of a query string, \( 1\leq k\leq t \).&
\pageref{p:k-qindex}\\
\emph{k}&
Largest number of pebbles which is insufficient to pebble \( 2^{k} \) nodes
in Bennett's pebble game.&
\pageref{p:kpebbles}\\
\kB&
Boltzmann's constant, \( \about 1.4\times 10^{-23}\units {\Joule /\Kelvin } \).&
\pageref{p:kB}\\
\emph{L}&
For given \SP, \TI, and \emph{A,} the separator language \( L(A) \) shows \( \RST {\SP }{\TI }^{A}\nsupseteq \ST {\SP }{\TI }^{A} \); it
belongs to the latter class but not the former.&
\pageref{p:L}\\
\emph{L}&
{\raggedright In \S\ref{s:nonrel}, for given \SP, \TI, this is  the (non-relativized) language
showing that \( \RST {\SP }{\TI }\nsupseteq \ST {\SP }{\TI } \).}&
\pageref{p:Lnonrel}\\
\( \ell  \)&
A natural number giving the length of a bit-string.&
\pageref{p:ell}\\
\emph{M}&
In \S\ref{s:nonrel}, this is an (oracle-less) reversible machine presumed to
decide the language \emph{L} within \( c+c\SP  \) space and \( c+c\TI  \)
time.&
\pageref{p:Mnonrel}\\
\( M_{i} \)&
The reversible oracle-querying machine in the \emph{i}th pair \( (M_{i},c_{i}) \)
of an enumeration of all pairs of such machines and constant factors.&
\pageref{p:enum}\\
\meter&
Meter; unit of length originally defined as \( \frac{1}{4}\times 10^{-7} \)
of Earth's circumference.&
\pageref{p:meter}\\
\Newton&
Newton; the SI unit of force, defined as  \( 1\units {\meter \cdot \kg /\second ^{2}} \)\label{p:kilo}\label{p:gram}\label{p:second}.&
\pageref{p:Newton}\\
\N&
The set of the natural numbers, \{0, 1, 2, ...\}.&
\pageref{p:N}\\
\NP&
The complexity class of languages decidable in polynomial time by nondeterministic
Turing machines.&
\pageref{p:NPclass}\\
\emph{n}&
Number of levels in Bennett's 1989 algorithm \cite{Bennett-89}.&
\pageref{p:nBen}\\
\emph{n}&
Abbreviation of \nin\ or \( n_{i} \).&
\pageref{p:n}\\
\( n_{i} \)&
The length of input strings for which machine \( M_{i} \) fails to decide \emph{L}
within the space-time bounds determined by \SP, \TI, and constant \( c_{i} \).&
\pageref{p:ni}\\
\nin&
Number of bits in an input string.&
\pageref{p:nin}\\
\textsc{next}&
Given time point \( \tau  \), \textsc{next}\( (q_{j} \)) is the next query in \( M_{i} \)'s
history involving \( q_{j} \) before time \( \tau  \).&
\pageref{p:next}\\
\emph{O}&
Some arbitrary oracle. (In our context, a self-reversible one.)&
\pageref{p:O-oracle}\\
\Atmost&
The {}``at most{}'' order-of-growth operator \( \Atmost  \) maps any function
\( g:\N \rightarrow \N  \) to the set \( F=\Atmost (g) \) of functions \emph{f}
that are asymptotically at most proportional to \emph{g}.&
\pageref{t:oog}\\
\Lessthan&
The {}``less than{}'' order-of-growth operator \Lessthan\ maps any function
\( g:\N \rightarrow \N  \) to the set \( F=\Lessthan (g) \) of functions \emph{f}
that are asymptotically strictly less than \emph{g}.&
\pageref{t:oog}\\
\P&
The complexity class of languages decidable in polynomial time in most traditional
models of computation (\eg, Turing machines).&
\pageref{p:Pclass}\\
\prem&
Power premium; ratio of lifetime power cost to logic hardware cost.&
\pageref{p:prem}\\
\emph{p}&
The number of nodes that are pebbled at time \( \tau  \).&
\pageref{p:p}\\
\textsc{prev}&
Given time point \( \tau  \), \textsc{prev}\( (q_{j} \)) is the previous query in \( M_{i} \)'s
history involving \( q_{j} \) before time \( \tau  \).&
\pageref{p:prev}\\
\emph{Q}&
Quality factor; ratio between energy transfered and energy dissipated during
a system's cycle of operation.&
\pageref{p:Q}\\
\emph{q}&
A possible oracle query string, \ie, an oracle tape contents, \ie, a bit string,
\ie, a graph node identifier, \ie, a graph node.&
\pageref{p:query}\\
\( q' \)&
An alternative final node in the chain, replacing our original choice of \( q_{t} \).&
\pageref{p:qprime}\\
\( q_{0} \)&
Initial query string in a node chain. \( q_{0}={\texttt {0}}^{\SP } \).&
\pageref{p:q0}\\
\( q_{j} \)&
A particular query string in the sequence \( q_{1},\ldots ,q_{t} \) formed
from \emph{x,} or if \( j=0 \), see \( q_{0} \) above.&
\pageref{p:qj}\\
\( \field {R} \)&
The (nonconstructive) set of all {}``real{}'' numbers.&
\pageref{p:R}\\
\RTISP&
\RST{\SP}{\TI} is the complexity class of problems solvable by reversible algorithms
taking time \( \Atmost (\TI ) \) and space \( \Atmost (\SP ) \).&
\pageref{p:RST}\\
\( r(I) \)&
In \S\ref{s:nonrel}, the 1-bit result for a given input memory \emph{I}, found
by doing \( \lfloor \TI /\SP \rfloor  \) iterated pointer dereferences in \emph{I}
starting at address \( {\texttt {0}}^{b} \).&
\pageref{p:result}\\
\SP&
Space bounding function \( \SP :\N \rightarrow \N  \), mapping an input  length \nin
to an upper bound \( \SP (\nin ) \) on the number of temporary state bits used
at any time in processing any input of length \nin.&
\pageref{p:SP}\\
\( \SP'  \)&
A larger space bounding function, \( \SP '\lessthan \SP \log (\TI /\SP ) \), which
is \emph{still} not enough to allow reversible machines to compute the same functions
in linear time (in our oracle model).&
\pageref{p:SPprime}\\
\second&
Second; unit of time originally defined as 1/86,400 of Earth's solar day.&
\pageref{p:second}\\
\descsys&
A description system. (In \S\ref{s:nonrel}, a particular one that we are defining.)&
\pageref{p:descsys}\\
\( \descsys _{i} \)&
The particular description system used to select the incompressible string \emph{x}
that defines the chain of nodes that foils \( M_{i} \).&
\pageref{p:descsysi}\\
\temperature&
Absolute temperature.&
\pageref{p:T}\\
\TI&
Time bounding function \( \TI :\N \rightarrow \N  \), mapping an input length
\nin to an upper bound \( \TI (\nin ) \) on the number of state-update {}``ticks{}''
to be used in processing any input of length \nin.&
\pageref{p:TI}\\
\( \TI ' \)&
Actual number of steps \( \TI '\leq c_{i}+c_{i}\TI (n) \) taken before halting in
the case of a machine that does not exceed the time bound.&
\pageref{p:TIprime}\\
\TISP&
\ST{\SP}{\TI} is the complexity class of problems solvable by ordinary algorithms
taking time \( \Atmost (\TI ) \) and space \( \Atmost (\SP ) \).&
\pageref{p:ST}\\
\emph{t}&
\( t(n) \) is the number of nodes, size \( \SP (n) \) each, in a chain  of
nodes that will take time \( \Exactly (\TI (n)) \) to  traverse on a serial
machine. \( t(n)\equiv \lfloor \TI (n)/\SP (n)\rfloor  \).&
\pageref{p:t}\\
\( \tau  \)&
\( 0\leq \tau \leq \TI ' \), an index of the machine configuration of \( M_{i} \) (running
on the oracle graph) that results after \( \tau  \) steps (primitive operations)
have taken place.&
\pageref{p:tau}\\
\( \Delta \tau _{j} \)&
The number of steps between time \( \tau  \) and the query in direction \emph{D} that
causes node \( q_{j} \) to be pebbled at time \( \tau  \).&
\pageref{p:Deltatauj}\\
\Exactly&
The {}``exactly{}'' order-of-growth operator \Exactly\ maps any function \( g:\N \rightarrow \N  \)
to the equivalence class \( F=\Exactly (g) \) of functions \emph{f} that are
asymptotically proportional to \emph{g}.&
\pageref{t:oog}\\
\( V \)&
Logic swing voltage; absolute voltage difference between \zerobit\ and \onebit\
logic levels.&
\pageref{p:V}\\
\emph{w}&
An arbitrary bit-string input to our oracle-querying machines.&
\pageref{p:w}\\
\( w_{i} \)&
In \S\ref{s:nonrel}, length-\emph{b} bit string number \emph{i}, where \( 1\leq i\leq t \), in
a linked list of bit strings formed from \emph{x}.&
\pageref{p:wi-nonrel}\\
\emph{x}&
A bit-string, \( |x|=\TI (n) \), incompressible in description system \( s_{i} \), to
be broken up into a chain of node bit-strings \( q_{0},\ldots ,q_{t} \).&
\pageref{p:x}\\
\( x' \)&
\emph{x} with a substring spliced out. (See explanations in text.)&
\pageref{p:xprime}\\
\emph{y}&
A bit-string that is described by another bit-string \emph{d} under some description
system \emph{s}.&
\pageref{p:describee}\\
\emph{z}&
The maximum length over all oracle queries asked by machines  \( M_{0},\ldots ,M_{i-1} \)
running within their respective bounds \( c_{0},\ldots ,c_{i-1} \) when given
respective inputs \( \zerobit ^{n_{0}},\ldots ,\zerobit ^{n_{i-1}} \).&
\pageref{p:z}\\
\Atleast&
The {}``at least{}'' order-of-growth operator \Atleast\ maps any function
\( g:\N \rightarrow \N  \) to the set \( F=\Atleast (g) \) of functions \emph{f}
that are asymptotically no less than proportional to \emph{g}.&
\pageref{t:oog}\\
\Morethan&
The {}``more than{}'' order-of-growth operator \Morethan\ maps any function
\( g:\N \rightarrow \N  \) to the set \( F=\Morethan (g) \) of functions \emph{f}
that are asymptotically strictly greater than \emph{g}.&
\pageref{t:oog}\\
\hline 
\end{longtable}\par}

\section{Review of previous results in reversible computing theory}
\label{s:review}

In this section we briefly review previous results in the theory of computability
and of computational complexity relating to reversible computation.

\paragraph{Reversible models of computation}
Reversible models of computation can be easily defined in general as models
of computation in which the transition function between machine configurations
has a single-valued inverse. In other words, the directed graph showing allowed
transitions between states has in-degree 1. In this paper we will always deal
with machines that are deterministic, so that the configuration graph always
has out-degree one as well. See figure~\ref{f:revers}.
\begin{figure}
\centerline{\resizebox*{0.75\textwidth}{!}{\includegraphics{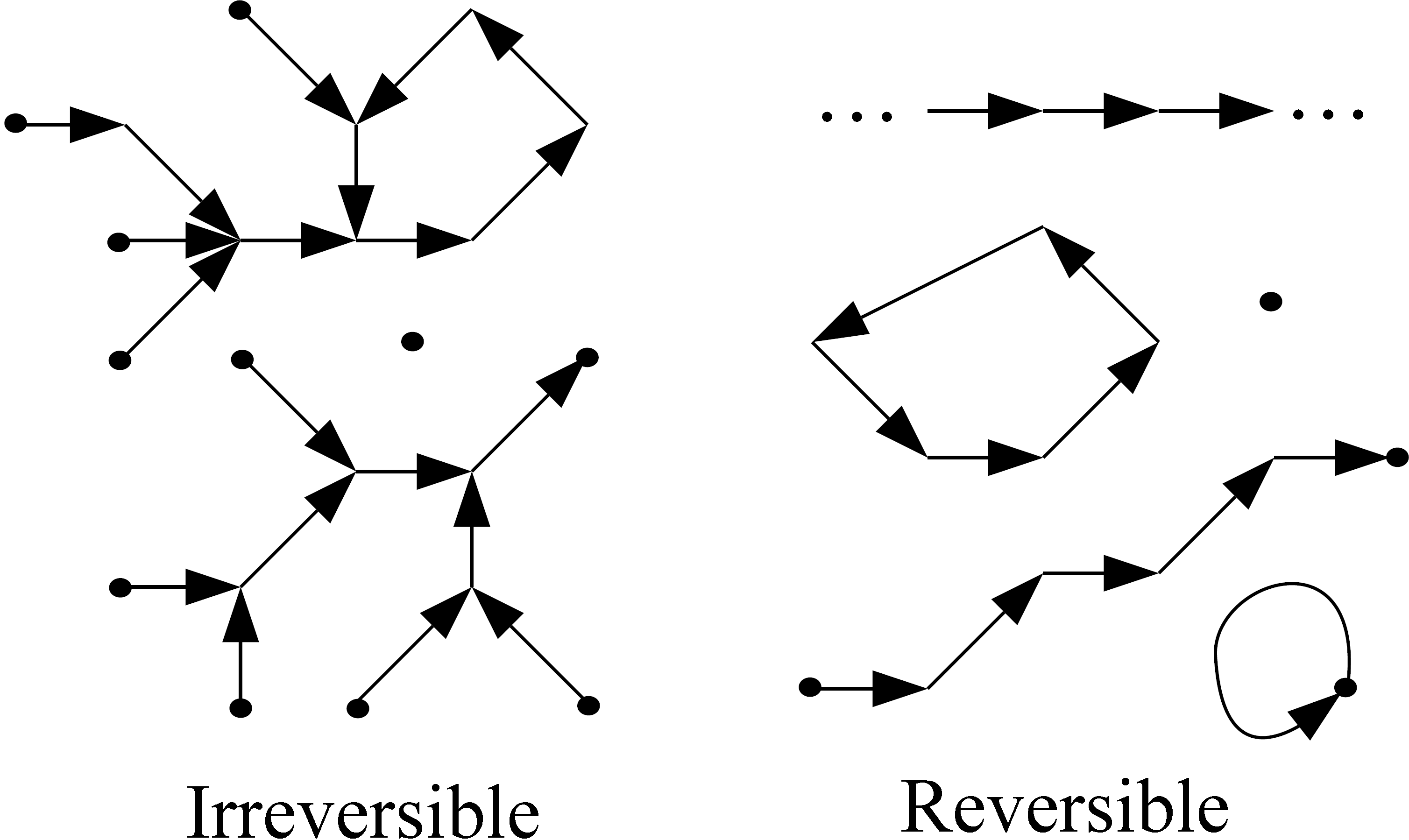}}}

\caption{Machine configuration graphs in (deterministic) reversible and irreversible
models of computation.\label{f:revers}
\captionpars

In the configuration graphs of irreversible machines, configurations may have
many different predecessor configurations. In reversible models of computation,
each configuration may have at most one predecessor. The configuration graph
therefore consists of disjoint loops and chains, which may be infinite. In both
reversible and irreversible models we may, if we wish, permit configurations
having 0 predecessors (initial states) and/or 0 successors (final states).
}

\end{figure}

\subsection{Computability in reversible models}

\paragraph{Unbounded-space reversible machines are Turing-universal}
In his 1961 paper \cite{Landauer-61}, Landauer had already pointed out that
arbitrary irreversible computations could be embedded into reversible ones by
simply saving a record of all the information that would otherwise be thrown
away (\cf~\S{}3 of~\cite{Landauer-61}). This observation makes it obvious
that reversible machines with unbounded memory can certainly compute all the
Turing-computable functions.

We call this idea, of embedding an irreversible computation into a reversible
one by saving a history of garbage, a {}``Landauer embedding,{}'' since Landauer
seems to have been the first to suggest it.

\paragraph{A certain model of reversible finite automata is especially weak}
In contrast to Landauer's result, in 1987 Pin \cite{Pin87} investigated reversible
\emph{finite} automata, which he defined as machines with fixed memory reading
an unbounded-length \emph{one-way} stream of data, and found that such cannot
even decide all the regular languages, which means that technically they are
strictly less powerful than normal irreversible finite automata.

So there are stream recognition tasks computable by an irreversible machine
with fixed memory that no purely reversible machine with fixed memory can compute,
given an external one-way stream of input. We should note, however, that this
incapacity may be due to the non-reversible nature of the input flow, rather
than to the reversibility of the finite automaton itself. Conceivably, if a
finite reversible machine was permitted to read backwards as well as forwards
through its read-only input, and perform some sort of {}``unread{}'' operations,
it might then be able to recognize any regular language. But we have not investigated
that possibility in detail.

In any event, the finite automaton model is not generally considered
to express the salient features of computation, since real computers
are not designed as state machines with small fixed numbers of states,
but rather as unbounded-memory machines that can be given as much
external storage as needed to perform a particular task, and can
explore an enormous state space, one that grows exponentially with the
number of storage bits that are available.  So, for the rest of this
paper, we consider only models of computation that permit access to
increasingly large amounts of memory as input sizes increase.  For
such machines, Landauer's result overrides Pin's, and pure
computability is no longer an issue.  So we turn to questions of
computational complexity.

\subsection{Time complexity in reversible models}
\label{s:time-cpxy}

In a theoretical computer science context, {}``time complexity{}'' \TI\label{p:TI}
for serial machine models means essentially the number of primitive operations
performed. Landauer's suggestion (\cf~\S{}3 of~\cite{Landauer-61}) of embedding
each irreversible operation into a reversible one makes it clear that the number
of such operations in a reversible machine need not be larger than the number
for an irreversible machine, as was demonstrated more explicitly in later embeddings
by Lecerf \cite{Lecerf-63} and Bennett \cite{Bennett-73}. So under the time
complexity measure by itself, reversibility does not hurt.

Can a reversible machine perform a task using \emph{fewer}
computational operations than any irreversible machine?  Obviously
not, if we take reversible operations to just be a special case of
irreversible operations. However, it is interesting to note that,
physically speaking, actually it is the converse that is true:
so-called {}``irreversible{}'' operations, implemented physically, are
really just a special case of reversible operations, since physics is
believed to be \emph{always} reversible at a low level. The
implications of this fact for \emph{physical} time complexity are
discussed in more detail in \cite{Frank-99}. But, using the usual
computer-science definition of time as the number of
\emph{computational} operations required, clearly reversible machines
can be no more {}``time{}''-efficient than irreversible ones.

Although Lecerf and Bennett explicitly discussed their time-efficient reversible
simulations only in the context of Turing machines, the approach is easily generalized
to any model of computation in which we can give each processing element access
to an unbounded amount of auxiliary unit-access-time stack storage. For example,
Toffoli \cite{Toffoli-77} describes how one can use essentially the same trick
to create a time-efficient simulation of irreversible cellular automata on reversible
ones, by using an extra dimension in the cell array to serve as a garbage stack
for each cell of the original machine.

\subsection{Entropic complexity in reversible models}
\label{s:entcpx}

The original point of reversibility was not to reduce time but to reduce energy
dissipation, or in other words entropy production. Can this be done by reversible
machines? In 1961 Landauer \cite{Landauer-61} argued that it could not, since
if we cannot get rid of the {}``garbage{}'' bits that are accumulated in memory,
they just constitute another form on entropy, no better in the long term than
the kind produced if we just irreversibly dissipated those bits into physical
entropy right away.

\paragraph{Lecerf reversal}\label{p:lecrev}
However, in 1963, Lecerf \cite{Lecerf-63} formally described a construction
in which an irreversible machine was embedded into a reversible one that first
simulated the irreversible machine running forwards, then turned around and
simulated the irreversible machine in reverse, uncomputing all of the history
information and returning to a state corresponding to the starting state. If
anyone familiar with Landauer's work had noticed Lecerf's paper in the 1960's,
it would have seemed tantalizing, because here was Lecerf showing how to reversibly
get rid of the garbage information that was accumulated in Landauer's reversible
machine in lieu of entropy. So maybe the entropy production can be avoided after
all!

Unfortunately, Lecerf was apparently unaware of the thermodynamic implications
of reversibility; he was concerned only with determining whether certain questions
about reversible transformations were decidable. Lecerf's paper did not address
the issue of how to get useful results out of a reversible computation. In Lecerf's
embedding, by the time the reversible machine finishes its simulation of the
irreversible machine, any outputs from the computation have been uncomputed,
just like the garbage. This is not very useful!

\paragraph{The Bennett trick}
Fortunately, in 1973, Charles Bennett \cite{Bennett-73}, who was
unaware of Lecerf's work but knew of Landauer's, independently
rediscovered Lecerf reversal, and moreover added the ability to retain
useful output. The basic idea was simple: one can just reversibly copy
the desired output into available memory before performing the Lecerf
reversal! As far as we know, this simple trick had not previously
occurred to anyone.

Bennett's idea suddenly implied that reversible computers could in principle
be \emph{more} efficient than irreversible machines under at least \emph{one}
cost measure, namely entropy production. To compute an output on an irreversible
machine, one must produce an amount of entropy roughly equal to the total number
of (irreversible) operations performed; whereas the reversible machine in principle
can get by with \emph{no} new entropy production, and with the accumulation
of only the desired output in memory.

\paragraph{Entropy proportional to speed}
Unfortunately, absolutely zero entropy generation per operation is achievable
in principle only in the ideal limit of a perfectly-isolated ballistic (frictionless)
system, or in a Brownian-motion-based system that makes zero progress forwards
through the computation on average, and takes \( \Exactly (n^{2}) \) expected
time before visiting the \( n \)th computational step. In useful systems that
progress forwards at a positive constant speed, the entropy generation per operation
appears to be, at minimum, proportional to the speed. (We do not yet know how
fundamental this relationship is, but it appears to be the case empirically.)
A cost analysis that takes both speed and entropy into account will need to
recognize this tradeoff. We do this in \cite{Frank-etal-98,Frank-Knight-98}
and in ch.~6 of \cite{Frank-99}.

\subsection{Space complexity in reversible models}
\label{s:spc-cpxy}

In computational complexity, {}``space complexity{}'' refers to the number
\( \SP  \)\label{p:SP} of memory cells that are required to perform a computation.

\paragraph{Initial estimates of space complexity}
As Landauer pointed out \cite{Landauer-61}, his simple strategy of saving all
the garbage information appears to suffer from the drawback that the amount
of garbage that must be stored in digital form is as large as the amount of
entropy that would otherwise have been generated. If the computation performs
on average a constant number of irreversible bit-erasures per computational
operation, then this means that the memory usage becomes proportional to the
number of operations. This means a large asymptotic increase in memory usage
for many problems; up to exponentially large. Even if the garbage is uncomputed
using Lecerf reversal, this much space will still be needed temporarily during
the computation.

\paragraph{Bennett's pebbling algorithm}
In 1989, Bennett \cite{Bennett-89} introduced a new, more space-efficient reversible
simulation for Turing machines. This new algorithm involved doing and undoing
various-sized portions of the computation in a recursive, hierarchical fashion.
Figure~\ref{f:bennett} is a schematic illustration of this process. We call
this the {}``pebbling{}'' algorithm because the algorithm can be seen as a
solution to a sort of {}``pebble game{}'' or puzzle played on a one-dimensional
chain of nodes, as described in detail by Li and Vit\'{a}nyi '96 \cite{Li-Vitanyi-96b}.
(Compare figure~\ref{f:bennett}(a) with fig.~\ref{f:game} on page~\pageref{f:game}.)
We will discuss the pebble game interpretation and its implications in more
detail in~\S\ref{s:main}.

\begin{figure}
\centerline{\resizebox*{1\textwidth}{!}{\includegraphics{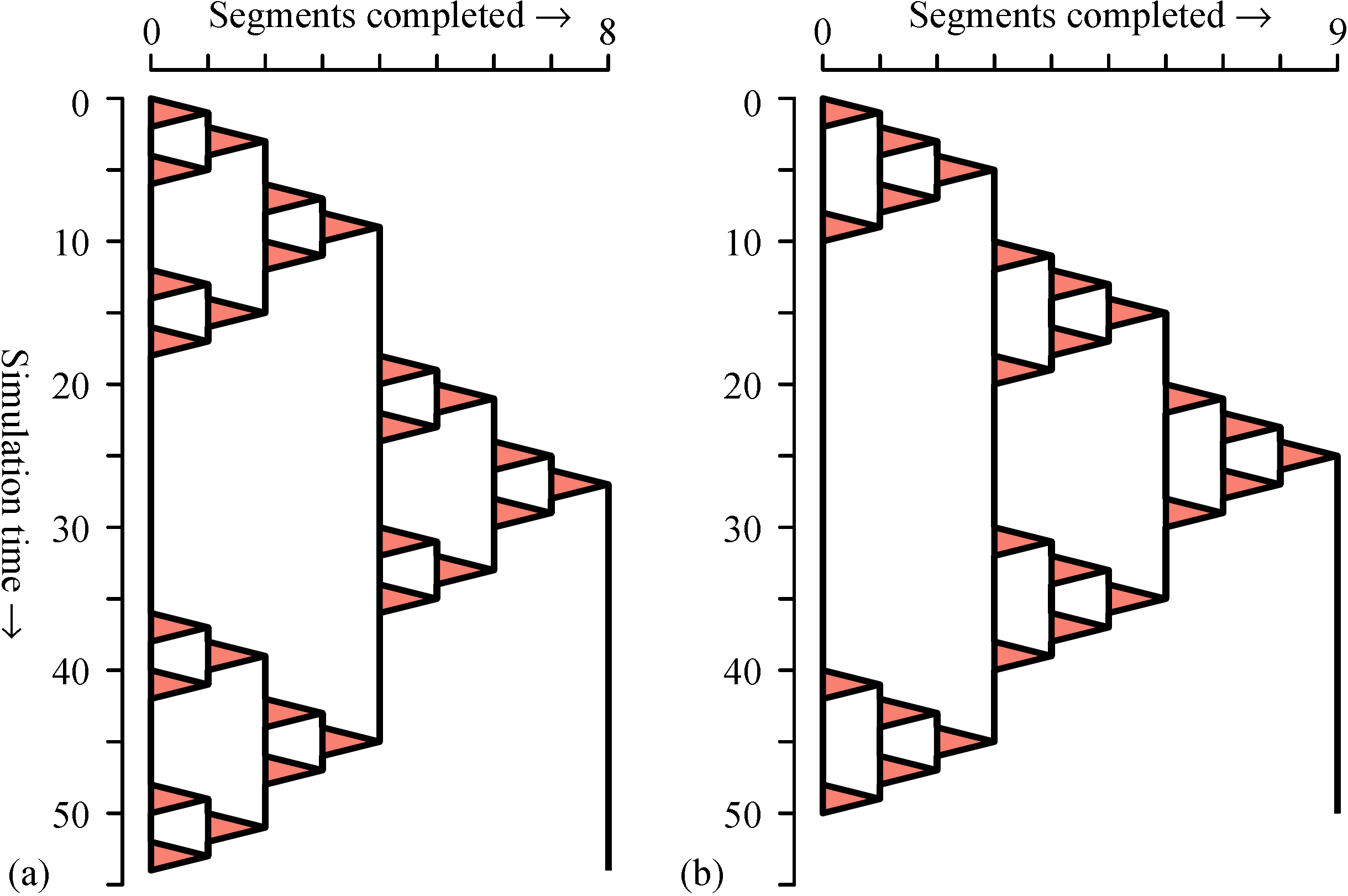}} }

\caption{\label{f:bennett}Illustration of two versions of Bennett's 1989 algorithm
for reversible simulation of irreversible machines. Diagram (a) illustrates
the version with \protect\( k=2\protect \), diagram (b) the version with \protect\( k=3\protect \).
(See text for explanation of \protect\( k\protect \).)
\captionpars

In both diagrams, the horizontal axis indicates which segment of the original
irreversible computation is being simulated, whereas the vertical axis tracks
time taken by the simulation in terms of the time required to simulate one segment.
The black vertical lines represent times during which memory is occupied by
an image of the irreversible machine state at the indicated stage of the irreversible
computation, whereas the shaded areas within the triangles represent memory
occupied by the storage of garbage data for a particular segment of the irreversible
computation being simulated.

Note that in (b), where \( k=3 \), the 9th stage is reached after only
25 time units, whereas in (a) 27 time units are required to only reach
stage 8. But note also that in (b), at time 25, five checkpoints
(after the initial state) are stored simultaneously, whereas in (a) at
most four are stored at any given time. This illustrates the general
point that higher-\( k \) versions of the Bennett algorithm run
faster, but require more memory. }

\end{figure}

The overall operation of the algorithm is as follows. The irreversible
computation to be simulated is broken into fixed-size segments, whose
run time is proportional to the memory required by the irreversible
machine. The first segment is reversibly simulated using a Landauer
embedding (as in \cite{Landauer-61}). Then the state of the
irreversible machine being simulated is checkpointed using the Bennett
trick of reversibly copying it to free memory. Then, we do a Lecerf
reversal (\S\ref{s:entcpx}, p.\ \pageref{p:lecrev}) to clean up the
garbage from simulating the first segment.

We proceed the same way through the second segment, starting from the first
checkpoint, to produce another checkpoint. After some number \( k \)\label{p:k}
of repetitions of this procedure, all the previous checkpoints are then removed
by reversing everything done so far except the production of the final checkpoint.
Now we have only a single checkpoint which is \( k \) segments along in the
computation. We repeat the above procedure to create another checkpoint located
another \( k \) segments farther along, and then again, and again \( k \)
times, then reverse everything again at the higher level to proceed to a point
where we only have checkpoint number \( k^{2} \) in memory. The procedure can
be applied indefinitely at higher and higher levels.

In general, for any number \( n \)\label{p:nBen} of recursive higher-level
applications of this procedure, \( k^{n} \) segments of irreversible computation
are be simulated by \( (2k-1)^{n} \) reversible forwards-and-backwards simulations
of a single segment, while having at most \( n(k-1) \) intermediate checkpoints
in memory at any given time \cite{Bennett-89}.

The upshot is that if the original irreversible computation takes time \( \TI  \)
and space \( \SP  \), then the reversible simulation via this algorithm takes
time \( O(\TI ^{1+\epsilon }) \)\label{p:epsilon} and space \( O(\SP \log \TI )=O(\SP ^{2}) \).
As \( k \) increases, the \( \epsilon  \) approaches 0 (very gradually), but
unfortunately the constant factor in the space usage increases at the same time
\cite{LevSher90}.

Li and Vit\'{a}nyi '96 \cite{Li-Vitanyi-96b} proved that Bennett's algorithm
(with \( k=2 \)) is the most space-efficient possible pebble-game strategy
for reversible simulation of irreversible machines. This result is central to
our proof.

Crescenzi and Papadimitriou '95 \cite{CrePap95} later extended Bennett's technique
to provide space-efficient reversible simulation of \emph{nondeterministic}
Turing machines as well.

\subsubsection{Achieving linear space complexity}
\label{s:linspace}

Bennett's results stood for almost a decade as the most
space-efficient reversible simulation technique known, but in 1997,
Lange, McKenzie, and Tapp \cite{Lange-etal-97} showed how to simulate
Turing machines reversibly in linear space---but using worst-case
exponential time. Their technique is very clever, but simple in
concept: Given a configuration of an irreversible machine, they show
that one can reversibly enumerate its possible predecessors. Given
this, starting with the initial state of the irreversible machine, the
reversible machine can traverse the edges of the irreversible
machine's tree of possible configurations in a reversible {}``Euler
tour.{}'' (See figure~\ref{f:lmt}.) This is analogous to using the
{}``right-hand rule{}'' technique (move forward while keeping your
right hand on the wall) to find the exit of a planar non-cyclical
maze. The search for the final state is kept finite, and the space
usage is kept small, by cutting off exploration whenever the
configuration size exceeds some limit. Unfortunately, the size of the
pruned tree, and thus the time required for the search, is still, in
the worst case, exponential in the space bound.

In a very recent result, Buhrman \etal\ '01 \cite{Buhrman-etal-01}
show that it is actually possible to view the Bennett and
Lange-McKenzie-Tapp techniques as extreme points on a continuous
spectrum of simulation algorithms having intermediate asymptotic space
and time requirements. (Unfortunately, all of the intermediate
algorithms in this tradeoff space still suffer at least a polynomial
increase in spacetime complexity.)

\begin{figure}
\centerline{\resizebox*{4in}{!}{\includegraphics{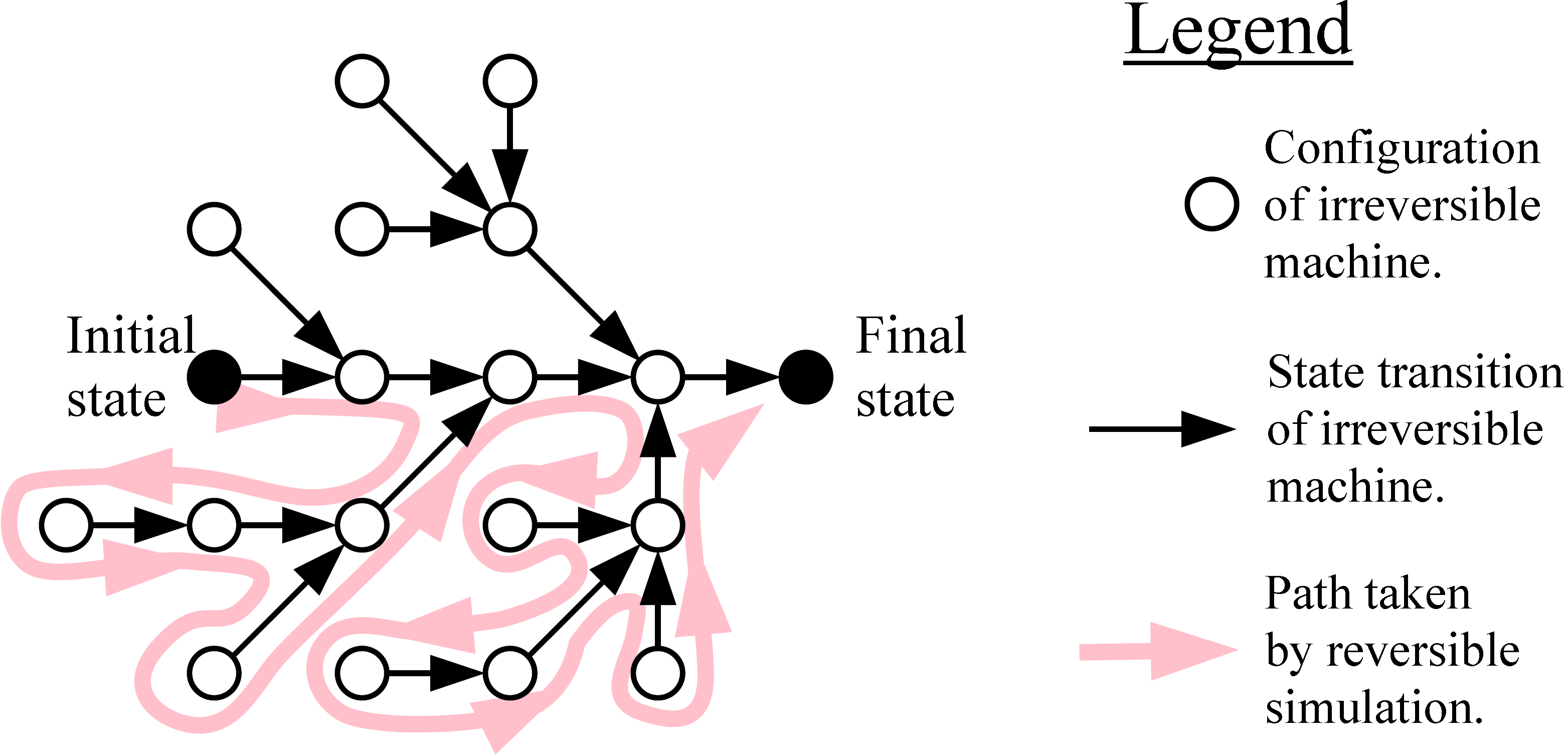}} }

\caption{\label{f:lmt}Illustration of an Euler tour of an irreversible machine's computation
tree. Although the tree has branches, the Euler tour is itself both forward-
and reverse-deterministic, and so can be traversed in purely reversible fashion,
using no more space than is needed to keep track of the current irreversible
machine configuration \cite{Lange-etal-97}.}

\end{figure}

As with Bennett's techniques, the Lange-McKenzie-Tapp technique was defined
explicitly only in terms of Turing machines, but it is easily generalized to
many different models of computation.

\noheading The above time and space complexity results for reversible
simulation are very interesting in themselves, but to our knowledge,
before our work no one had previously addressed the specific question
of whether a single reversible simulation could run both in linear
time like Bennett's 1973 technique \emph{and} in linear space like the
newer Lange \etal\ technique. Li and Vit\'{a}nyi's analysis
\cite{Li-Vitanyi-96b} of Bennett's 1989 algorithm \cite{Bennett-89}
leads to our proof in this paper that if such an ideal simulation
exists, it would not relativize to oracles, or work in cases where the
space bound is much less than the input length.

\subsection{Miscellaneous developments}

Here, we mention in passing a couple of other miscellaneous developments in
reversible computing theory.

Coppersmith and Grossman (1975, \cite{CG75}) proved a result in group theory
which implies that reversible boolean circuits only 1 bit wider than a fixed-length
input can compute arbitrary boolean functions of that input.

Toffoli (1977, \cite{Toffoli-77}) showed that reversible cellular automata
can simulate irreversible ones in linear time using an extra spatial dimension.
Fredkin and Toffoli developed much reversible boolean-circuit theory (1980--1982,
\cite{Toffoli80,Toffoli-80a,Fredkin-Toffoli-82}).

\section{General definitions}
\label{s:gendef}

In this section we set forth some general definitions that we will use
in our proof, but that may also be useful for future proofs in
reversible computing theory. Later, in section~\ref{s:specdef}, we
will give some additional, more specific definitions that are not
anticipated to be widely useful outside of this paper.

\subsection{Space-time complexity classes}
\label{s:stclasses}

Given any rever\-sible model of computation (\eg, reversible Turing
machines), and given any computational space and time bounding
functions \( \SP (\nin ),\TI (\nin ) \), we define (following Bennett
{\cite{Bennett-89}} the \defn{reversible space-time\/ \(\SP,\TI\)
complexity class}, abbreviated \( \RST {\SP }{\TI } \)\label{p:RST},
to be the set of languages that are accepted by reversible machines
that take worst-case space of \( \Atmost (\SP (\nin )) \) memory bits
and worst-case time \( \Atmost (\TI (\nin )) \) ticks, where \( \nin
\) is the length of the input. Similarly, we define the (unrestricted)
\defn{space-time\/ \(\SP,\TI\) complexity class}, abbreviated \( \ST
{\SP }{\TI } \)\label{p:ST}, to be the set of languages accepted in
that same order of space and time on the corresponding normal machine
model, without the restriction on the in-degree of the transition
graph. For oracle-rel\-a\-tiv\-ized complexity classes, we use the
notation \( \klass ^{O} \)\label{p:klass}\label{p:O-oracle}, as is
standard in complexity theory, to indicate the class of problems that
can be solved by the machines that define the class \( \klass \) if
they are allowed to query oracle \( O \).

We want to know whether \( \RST {\SP }{\TI }\qeq \ST {\SP }{\TI } \), for all
\( \SP ,\TI  \), in normal sorts of serial computational models such as multi-tape
Turing machines or RAM machines.

Unfortunately, we have found this question, in its purest form, very difficult
to definitively resolve. We do not see any general way to simulate normal machines
on reversible machines without suffering asymptotic increases in either the
time or space required. But neither do we know of a language that can be proven
to require extra space or time to recognize reversibly in ordinary machine models.
The difficulty is in constructing a proof that rules out all reversible algorithms,
no matter how subtle or clever.

But is the \( \RTISP \qeq \TISP \) question truly difficult to
resolve, or have we just been unlucky in our search for a proof? Often
in computational complexity theory, we find ourselves unable to prove
whether or not two complexity classes (for example, \P\label{p:Pclass}
and \NP\label{p:NPclass}) are equivalent. Traditionally (as in Baker
\etal\ \cite{BGS}), one way to indicate that such an equivalence might
really be difficult to prove is to show that if the machine model
defining each class is augmented with the ability to perform a new
type of operation (a query to a so-called {}``oracle{}''), then the
classes may be proven either equal or unequal, depending on the
behavior of the particular oracle. This shows that any proof equating
or separating the two classes must make use of the fact that normal
machine models are only capable of performing a particular limited set
of primitive operations. Otherwise, we could just add the appropriate
oracle call as a new primitive operation, and invalidate the supposed
proof.

In this section we will demonstrate, for any given \( \SP ,\TI  \) in a large
class, an oracle \( A \)\label{p:Aoracle} relative to which we prove \( \RST {\SP }{\TI }^{A}\neq \ST {\SP }{\TI }^{A} \),
for the case of serial machine models with a certain kind of oracle interface.
For these same \( \SP ,\TI  \) we have not yet found an alternative oracle
\( B \)\label{p:Boracle} for which \( \RST {\SP }{\TI }^{B}=\ST {\SP }{\TI }^{B} \),
except for irreversible oracles which make the equivalence trivial. It may be
that no reversible oracle that equates the classes exists, but this is uncertain.

\subsection{Reversible oracle interface}
\label{s:revoif}

First, we define an oracle interface that allows a reversible machine to call
an oracle. Ordinarily, oracle queries are irreversible, and thus impossible
in reversible machines. For example, a bit of the oracle's answer cannot just
overwrite some storage location, because regardless of whether the location
contained \zerobit\label{p:zerobit} or \onebit\label{p:onebit} before the
oracle call, after the call it would contain the oracle's answer. The resulting
configuration would thus have two predecessors, and the machine would be irreversible.

Our reversible oracle-calling protocol is as follows. Machines will have reversible
read and write access to a special \defn{oracle tape} which has a definite start,
unbounded length, and is initially clear. At any time, the machine is allowed
to perform an \defn{oracle call}, a special primitive operation which in a single
step replaces the entire contents of the oracle tape with new contents, according
to some fixed invertible mapping \( A:\tapespace \rightarrow \tapespace  \)\label{p:tapespace}
over the space \( \tapespace  \) of possible tape contents. The function \( A \)
is called a \defn{permutation oracle}. Further, if \( A \) is its own inverse,
\( A=A^{-1} \), it will be called \defn{self-reversible}. Presented more formally:
\begin{definition}
A \defn{permutation oracle} \( A \) is an invertible (bijective) function \mbox{\( A:\tapespace \rightarrow \tapespace  \)},
where \( \tapespace  \) is the space of possible contents of a semi-infinite
\defn{oracle tape}.
\end{definition}
\begin{definition}
A \defn{self-reversible (permutation) oracle} is a permutation oracle \( A \)
such that \( A=A^{-1} \).
\end{definition}

In the below, we will deal only with self-reversible
oracles. Self-reversibility ensures that machines can easily undo
oracle operations, just as they can easily undo their own internal
reversible primitives. (Since primitive operations such as
bit-operations are by definition finite operations over small
state-spaces, if those operations are invertible then their inverses
must be easy to compute.)  \ed{Give examples?} If oracle calls were
much harder to undo than to do, then the oracle model would be
unlikely to teach us anything meaningful about real machines.

\begin{figure}
\centerline{\resizebox*{0.8\textwidth}{!}{\includegraphics{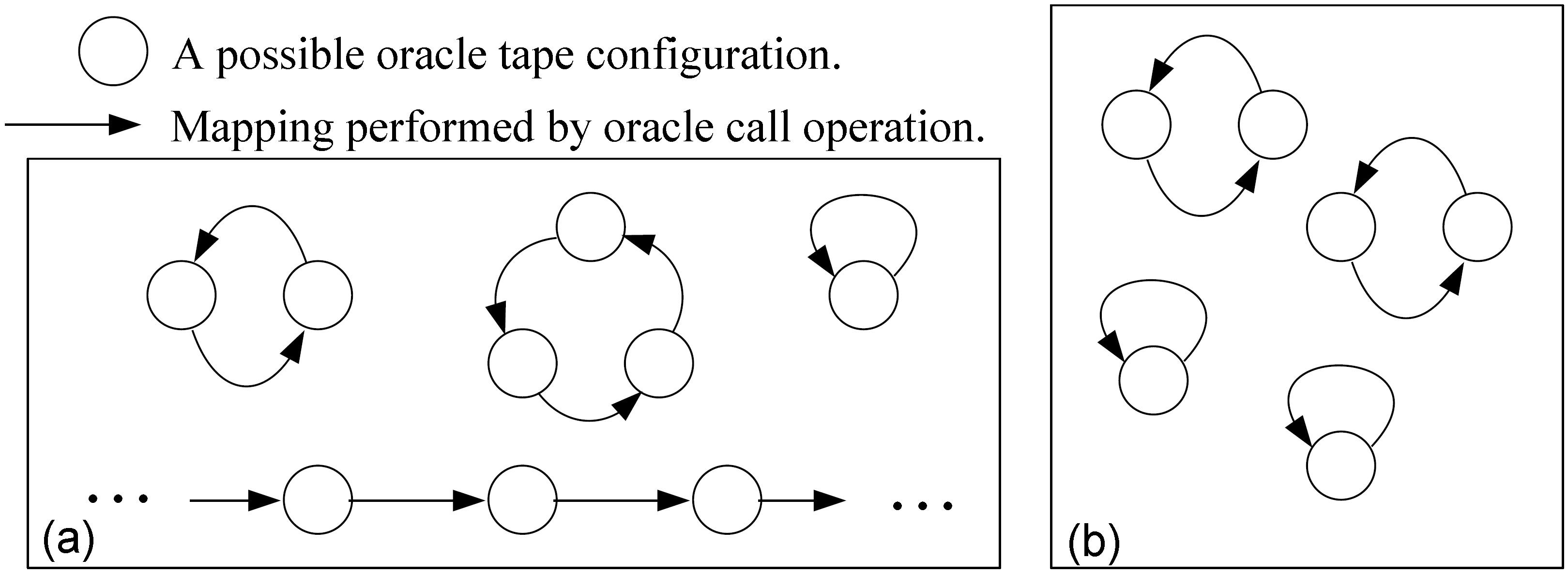}} }

\caption{\label{f:permut}Illustration of the structure of (a) a permutation oracle,
and (b) a self-reversible permutation oracle.\captionpars

In either case, the oracle call operation replaces the old contents of the oracle
tape with new contents according to a transition function \( A:\tapespace \rightarrow \tapespace  \)
that is a permutation mapping---a bijective function---over the space \( \tapespace  \)
of possible tape contents. The bijectivity of this function means that a call
to a permutation oracle is always a reversible operation. After an oracle call,
the previous oracle tape contents can be uniquely determined by applying the
inverse mapping \( A^{-1} \). In self-reversible oracles, \( A=A^{-1} \).
}

\end{figure}

\subsection{\protect\( \SP \TI \protect \)-constructibility}

In order for our proof to go through, we will need to restrict our attention
to space and time functions \( \SP (\nin ),\TI (\nin ) \) which are \defn{\(ST\)-constructible},
meaning that given any input of length \( \nin  \)\label{p:nin}, an irreversible
machine can construct binary representations of the numbers \( \SP (\nin ) \)
and \( \TI (\nin ) \) using only space \( \Atmost (\SP (\nin )) \) and time
\( \Atmost (\TI (\nin )) \). We state here without proof that many reasonable
pairs of functions are indeed \( \SP \TI  \)-constructible. For example, \( \SP =\nin ^{2} \),
\( \TI =\nin ^{3} \) can both be computed in time \( \Atmost (\log ^{2}\nin ) \)
plus \( \Atmost (\nin ) \) to count the input bits, and space \( \Atmost (\log \nin ) \)
plus \( \Atmost (\nin ) \) if we include the input.

\noheading
Next, we need some basic definitions to support the notion of incompressibility
that will be crucial to the proof of our theorem. The following definition and
lemma follow the spirit of the discussions of incompressibility in Li and Vit\'{a}nyi's
excellent book on Kolmogorov complexity \cite{LiVbook}.

\subsection{Description systems and compressibility}\label{s:descsys}

\begin{definition}
A \defn{description system} \( \descsys  \) is any function \( \funcdr {\descsys }{\bitset ^{*}}{\bitset ^{*}} \)\label{p:descsys}\label{p:setexp}
from bit-strings to bit-strings, that is, from \defn{descriptions} to the bit-strings
they describe. We say that a bit-string \( d \)\label{p:desc} \defn{describes}
bit-string \( y \)\label{p:describee} in description system \( s \) if \( s(d)=y \).
We say that a bit-string \( y \) is \defn{compressible} in description system
\( \descsys  \) if there is a shorter bit-string that describes it; \ie\ if
there exists a string \( d \) such that \( s(d)=y \) and \( |d|<|y| \), where
the notation \( |b| \)\label{p:bitstring} denotes the number of bits in bit-string
\( b \).
\end{definition}
\begin{lemma}[Existence of incompressible strings] 
For any description system \( s \),
and any string length \( \ell \in \N  \)\label{p:ell}\label{p:N}, there is
at least one bit-string \( y \) of length \( \ell  \) that is not compressible
in \( s \).
\end{lemma}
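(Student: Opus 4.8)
The plan is a straightforward counting (pigeonhole) argument, in the spirit of the incompressibility method. First I would count the objects on each side. There are exactly $2^{\ell}$ bit-strings of length $\ell$. By the definition of compressibility just given, a string $y$ with $|y|=\ell$ is compressible in $s$ precisely when there exists a description $d$ with $s(d)=y$ and $|d|<\ell$. Thus every compressible length-$\ell$ string lies in the image under $s$ of the set of all descriptions whose length is at most $\ell-1$.

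Next I would bound the number of such short descriptions. The number of bit-strings of length $i$ is $2^{i}$, so the total number of descriptions of length strictly less than $\ell$ is $\sum_{i=0}^{\ell-1} 2^{i} = 2^{\ell}-1$. Since $s$ is a function, each such description $d$ produces at most one value $s(d)$, so the image of this set of descriptions contains at most $2^{\ell}-1$ distinct strings. In particular, at most $2^{\ell}-1$ of the length-$\ell$ strings can be compressible in $s$.

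Finally I would compare the two counts: there are $2^{\ell}$ strings of length $\ell$, but at most $2^{\ell}-1$ of them are compressible, so at least one length-$\ell$ string is \emph{not} compressible in $s$, which is the claim. The edge case $\ell=0$ is handled automatically, since then the set of strictly shorter descriptions is empty and the bound $2^{\ell}-1=0$ correctly reports that the empty string is (vacuously) incompressible.

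I do not expect any genuine obstacle here; the only points requiring a little care are the geometric-sum identity and the observation that the argument uses \emph{nothing} about $s$ beyond its being a well-defined function from bit-strings to bit-strings. It need not be injective, computable, or total on any structured domain, which is exactly what makes the lemma applicable to the arbitrary description systems $s_i$ used later to select the incompressible strings $x$ that foil each machine $M_i$.
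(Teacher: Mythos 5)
Your proposal is correct and matches the paper's own proof essentially verbatim: both count the $2^{\ell}$ strings of length $\ell$ against the $\sum_{i=0}^{\ell-1}2^{i}=2^{\ell}-1$ shorter descriptions, note that each description $d$ can describe at most one string via $s(d)$, and conclude by pigeonhole that some length-$\ell$ string is incompressible. Your added remarks on the $\ell=0$ edge case and on needing nothing from $s$ beyond functionality are fine but not a departure from the paper's argument.
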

\begin{proof}[Trivial counting argument] There are \( 2^{\ell } \)
bit-strings of length \( \ell  \), but there are only \( \sum _{i=0}^{\ell -1}2^{i}=2^{\ell }-1 \)
descriptions that are shorter than \( \ell  \) bits long. Each description
\( d \) can describe at most one bit string of length \( \ell  \), namely
the string \( s(d) \) if that string's length happens to be \( \ell  \). Therefore
there must be at least one remaining bit-string \emph{y} of \emph{}length \( \ell  \)
that is not described by any shorter description.
\end{proof}

In our main proof, we will be selecting incompressible strings from a series
of computable description systems.

\subsection{Notational conventions}

In the following, we will often abbreviate the space and time function
values \( \SP (\nin ) \) and \( \TI (\nin ) \) by just \( \SP \) and
\( \TI \), respectively; likewise for other functions of \( \nin
\). For comparing orders of growth, we will use both the standard \(
\Exactly \), \( \Atmost \), \( \Atleast \), \( \Lessthan \), \(
\Morethan \) notations, and our mnemonic \( \exactly \), \( \atmost
\), \( \atleast \), \( \lessthan \), \( \morethan \) notation, defined
in table~\ref{t:oog}.
\begin{table}
{\centering \begin{tabular}{p{0.75in}cp{2.75in}}
\hline 
Cryptic &
 A more &
\\
 standard &
 mnemonic &
 Mathematical definition; \\
 notation &
 notation &
 English explanation \\
\hline 

{\raggedright\( f=\Exactly (g) \) or\\ \( f\in \Exactly (g) \)}&
 \( f\exactly g \)&
\( \exists c_{1},c_{2},n_{0}>0:\, \forall n>n_{0}:\, 0< c_{1}g(n)< f(n)< c_{2}g(n) \); \emph{f}
has the same asymptotic order of growth as \emph{g.}\\[0.1in]

{\raggedright\( f=\Atmost (g) \) or\\ \( f\in \Atmost (g) \)}&
\( f\atmost g \)&
\( \exists c,n_{0}>0:\, \forall n> n_{0}:\, 0< f(n)< cg(n) \); \emph{f}
has a lower asymptotic order of growth than \emph{g}.\\[0.1in]

{\raggedright \( f=\Atleast (g) \) or\\ \( f\in \Atleast (g) \)} &
 \( f\atleast g \)&
\( \exists c,n_{0}>0:\, \forall n> n_{0}:\, 0< cg(n)< f(n) \); \emph{f}
has a greater asymptotic order of growth than \emph{g}.\\[0.1in]

{\raggedright  \( f=\Lessthan (g) \) or\\ \( f\in \Lessthan (g) \)}&
\( f\lessthan g \)&
\( \forall c>0:\, \exists n_{0}>0:\, \forall n> n_{0}:\, 0< f(n)<cg(n) \); \emph{f}
has a strictly lower asymptotic order of growth than \emph{g}.\\[0.1in]

{\raggedright\( f=\Morethan (g) \) or\\ \( f\in \Morethan (g) \)}&
\( f\morethan g \)&
\( \forall c>0:\, \exists n_{0}>0:\, \forall n> n_{0}:\, 0< cg(n)<f(n) \); \emph{f}
has a strictly greater asymptotic order of growth than \emph{g}.\\
\hline 
\end{tabular}\par}

\caption{\label{t:oog}Asymptotic order-of-growth notation. In addition to reviewing
the standard notation, we introduce a simplified, more mnemonic notation that
will be convenient in some contexts.}

\end{table}

\section{Main theorem}\label{s:main}

\paragraph{Preliminary discussion}
In this section we prove that reversible machine models require higher asymptotic
space-time complexity on some problems than corresponding irreversible models,
if a certain new reversible black-box operation (a self-reversible oracle \emph{A})
is made available to both models. Thus, no \emph{completely} general technique
can exist for simulating irreversible machines on reversible ones with no asymptotic
overhead.

However, the new primitive operation that we defined in order to make this proof
go through is not itself physically realistic. The operation implements a computable
function, but the operation is modeled as taking constant (\( \Exactly (1) \))
time to perform independent of the size of its input, which violates physical
locality and the asymptotically very large number of steps that it would take
to compute the operation using the algorithm that corresponds directly to the
operation's definition.

Therefore, technically, even given our proof, it is still an open question whether
a perfectly efficient simulation technique might still exist that works in the
case of reversible machines simulating irreversible machines that are composed
only of primitives that are physically realistic. 

Incidentally though, if one wishes to progress to \emph{complete} physical realism,
then to be completely fair, one should take into account the physical time and
space costs associated with removing the physical entropy produced by irreversible
operations from a machine, when comparing reversible and irreversible machine
models. We do this in \cite{Frank-99} and conclude that under certain reasonable
assumptions, a variety of physically realistic reversible models are actually
asymptotically strictly \emph{more} spacetime-efficient on some problems than
are the corresponding irreversible models, although an extremely large scale
of machine may be required to realize that particular theoretical benefit.

The encroaching issue of lower limits on bit energies is more important. As
we mentioned in \S\ref{s:intro}, the exact magnitude of the purely \emph{computational}
asymptotic overheads incurred by reversible operation has an important role
to play in helping to make an accurate comparison between the potential efficiency
of reversible and irreversible machine designs in particular technologies. It
is a key element that drastically affects the shape of the overall tradeoff
function between energy costs and hardware costs in partially-adiabatic machine
design spaces.

Below, we will prove our results in both oracle-relativized and non-oracle forms
for serial (uniprocessor) machines. The oracle results cover a large family
of possible asymptotic bounds on the joint space and time requirements of computations.
For all bounding functions within this family, we show that there exist an oracle
and a language such that the language is decidable within the given bounds by
serial machines that can query the oracle only if the machines are \emph{ir}reversible.
This result is non-trivial (compared to Pin's, for example) because the individual
oracle calls are themselves reversible and easy to undo.

In section~\ref{s:nonrel}, a similar result, not involving an oracle, covers
cases where the space bound is much smaller than the length of the randomly
(and reversibly) accessible input. Corollaries to both the oracle and non-oracle
results give loose lower bounds on the amount of extra space a reversible machine
will require to decide the language within the same time bounds as the irreversible
machine, although one should keep in mind that this approach of meeting the
time bounds will not necessarily minimize the real costs corresponding to the
space-time \emph{product}.

Another contribution of our proof is to illustrate ways to use incompressibility
arguments in analyzing reversible machines. It is conceivable that similar techniques
might increase the range of reversible and irreversible space-time complexity
classes that we can separate without resorting to the oracle.

\subsection{Statement of main theorem}\label{s:st-thm}

\begin{theorem}\label{thm:revsep}
\emph{(Relative separation of reversible and irreversible space-time complexity classes.)}
Let \( \SP ,\TI \) be any two non-decreasing functions over the
non-negative integers. Then both of the following are true:
\begin{enumerate}
\item[(a) ] If \( \SP \atleast \TI  \) or \( \TI \atleast2 ^{\SP } \), then \( \RST {\SP }{\TI }^{O}=\ST {\SP }{\TI }^{O} \)
for any self-reversible oracle \( O \).
\item[(b) ] If \( \SP \lessthan \TI \lessthan2 ^{\SP } \), and if \( \SP ,\TI  \)
are \( \SP \TI  \)-con\-struct\-ible, then there exists a computable, self-reversible
oracle \( A \) such that \( \RST {\SP }{\TI }^{A}\neq \ST {\SP }{\TI }^{A} \).
\end{enumerate}
\end{theorem}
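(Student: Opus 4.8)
The two regimes correspond exactly to the two known simulation techniques, each of which swallows its own overhead; since a reversible machine is a restricted general machine, $\RST{\SP}{\TI}^{O}\subseteq\ST{\SP}{\TI}^{O}$ holds always and only the reverse inclusion needs work. When $\SP\atleast\TI$ I would use a Landauer embedding: simulate the irreversible machine while appending to a history tape the reverse-information of each irreversible primitive (oracle calls need no record, since $O=O^{-1}$ makes them locally undoable). The history has length $\Atmost(\TI)$, which the hypothesis $\SP\atleast\TI$ makes $\Atmost(\SP)$, so the simulation runs in space $\Atmost(\SP)$ and time $\Atmost(\TI)$. When $\TI\atleast2^{\SP}$ I would instead use the Lange--McKenzie--Tapp Euler-tour simulation, running in linear space $\Atmost(\SP)$ and time proportional to the number of reachable configurations, a quantity $2^{\Atmost(\SP)}$ that the regime $\TI\atleast2^{\SP}$ is chosen precisely to render $\Atmost(\TI)$; self-reversibility lets the tour step across oracle calls by re-querying. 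Either way $\ST{\SP}{\TI}^{O}\subseteq\RST{\SP}{\TI}^{O}$, giving equality.

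\textbf{Part (b): the construction.} The plan is to model iterated application of a one-way function. I would let the oracle tape hold a node $q\in\bitset^{\SP}$ together with a direction flag, and define the self-reversible $A$ by ``advance in the flag's direction and toggle the flag,'' so that $A=A^{-1}$ automatically while a forward call sends $q_{j}\mapsto q_{j+1}$. For each diagonalization stage I build a chain $q_{0}=\zerobit^{\SP},q_{1},\dots,q_{t}$ with $t=\lfloor\TI/\SP\rfloor$, where the concatenation $x=q_{1}\cdots q_{t}$ (length $\TI$) is chosen incompressible in a stage-specific description system $s_{i}$. The hypothesis $\TI\lessthan2^{\SP}$ guarantees $t<2^{\SP}$, so enough distinct length-$\SP$ nodes exist to form a simple chain, while $\SP\lessthan\TI$ guarantees $t\to\infty$. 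The separating language sets $L(\zerobit^{n})$ to a single membership bit stored just \emph{past} $q_{t}$ (one further call once $q_{t}$ is in hand retrieves it); being past the end, this bit is independent of the incompressibility of $x$. An irreversible machine decides $L$ by starting at $q_{0}$, overwriting its one node-register with the successor $t$ times, and reading the bit, in space $\Atmost(\SP)$ and time $\Atmost(\TI)$, so $L\in\ST{\SP}{\TI}^{A}$. I would enumerate all pairs $(M_{i},c_{i})$ of reversible oracle machines and constants and build $A$ in stages, at stage $i$ choosing $n_{i}$ so large that the length-$\SP(n_{i})$ nodes exceed the longest query length $z$ of the earlier machines, keeping the new nodes fresh and the global $A$ a single consistent computable permutation.

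\textbf{Part (b): the lower bound.} The heart is to show $L\notin\RST{\SP}{\TI}^{A}$, \ie\ that each $M_{i}$ fails on $\zerobit^{n_{i}}$. I would run $M_{i}$ within its bounds and track which chain nodes it has materialized anywhere in its configuration (work tape or oracle tape), calling a materialized $q_{j}$ a pebble on node $j$. Two incompressibility facts drive the argument. First, since the $q_{j}$ are jointly incompressible substrings of $x$, storing $m$ of them costs $\approx m\SP$ bits, so the bound $c_{i}+c_{i}\SP$ permits at most $p_{i}\approx c_{i}=\Atmost(1)$ simultaneous pebbles. Second, one-wayness (again incompressibility): the only way $M_{i}$ can bring $q_{j}$ into memory is to query $A$ on an already-present neighbor $q_{j\pm1}$, for otherwise $q_{j}$ would be computable from $M_{i}$'s remaining state and would compress $x$. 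Because $M_{i}$ is reversible, the same neighbor condition governs \emph{erasing} a pebble, so $M_{i}$'s materialization history is a legal play of the \emph{reversible} pebble game with $p_{i}$ pebbles. By Li and Vit\'anyi's optimality of Bennett's strategy, $p_{i}$ pebbles reach a node of index at most $\Atmost(2^{p_{i}})$; choosing $n_{i}$ with $t(n_{i})>\Atmost(2^{p_{i}})$, possible since $t\to\infty$ while $p_{i}$ stays constant, forces $M_{i}$ never to query $q_{t}$. Its output is then independent of the membership bit, which I set opposite to $M_{i}$'s answer, making $M_{i}$ wrong.

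\textbf{The main obstacle.} The delicate step is converting ``bounded space'' into ``bounded, rule-respecting pebbling'' rigorously through incompressibility: I must engineer each $s_{i}$ so that any violation---$M_{i}$ holding more than $p_{i}$ node-values, or conjuring a node with no adjacent pebble---produces a description of $x$ shorter than $\TI$ bits, which is where the splicing of a recomputable stretch $x'$, and possibly an alternative endpoint $q'$, enters. I expect this bookkeeping, together with verifying that the staged $A$ remains one computable self-reversible permutation, to be the bulk of the work; the pebble-game threshold itself then follows immediately from the cited Li--Vit\'anyi bound once the reduction is in place.
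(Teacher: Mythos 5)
Your part (a) follows the paper's own route (Landauer/Bennett-1973 embedding in the regime \( \SP \atleast \TI \), Lange--McKenzie--Tapp in the regime \( \TI \atleast 2^{\SP} \), both relativizing to self-reversible oracles), and your part (b) skeleton---diagonalization over pairs \( (M_i, c_i) \), an incompressible chain of length-\( \SP \) nodes, reduction to Bennett's reversible pebble game, and the Li--Vit\'anyi bound---is also the paper's plan. However, your oracle encoding contains a fatal flaw that destroys the separation before any of that machinery can be applied.

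The gap: you define \( A \) on tape contents \( (q,\text{flag}) \) by ``advance in the flag's direction and toggle the flag,'' so a forward call sends \( (q_j, F) \mapsto (q_{j+1}, B) \). Self-reversibility then forces \( A(q_{j+1}, B) = (q_j, F) \): your oracle computes \( f^{-1} \) on demand, given a node alone. Worse, each call \emph{replaces} the tape contents, so a reversible machine decides your language trivially within the bounds: write \( (q_0, F) \), then alternately call \( A \) and flip the flag bit (a reversible involution); after \( t \) calls the tape holds \( q_t \) with nothing accumulated, using space \( \Atmost(\SP) \) and time \( \Atmost(\TI) \). Hence \( L \in \RST{\SP}{\TI}^{A} \) and there is no separation. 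Equivalently, your key claims that ``the only way \( M_i \) can bring \( q_j \) into memory is to query an already-present neighbor,'' and that the same condition governs erasing, are simply false for this oracle, so executions need not be legal pebble-game plays and the Li--Vit\'anyi bound never engages. The paper avoids this by giving the oracle no backward mode at all: \( A \) maps \( q \mapsto q\hash f(q) \) and \( q\hash f(q) \mapsto q \), identity elsewhere. With that encoding, looking up a successor is easy, and \emph{uncomputing} a node is easy only when its predecessor is still in hand (exactly the unpebbling rule), but nothing lets a machine obtain a predecessor, or erase a node, from the node alone---that one-wayness is the entire engine of the lower bound, and it is precisely the property your direction-flag design gives away.
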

\begin{proof}
\begin{demo}{Part (a)}
(Cases \( \SP \atleast \TI  \) and \( \TI \atleast2 ^{\SP } \).) First, if
\( \SP \morethan \TI  \), then obviously we have both \( \RST {\SP }{\TI }^{O}=\RST {\TI }{\TI }^{O} \)
and \( \ST {\SP }{\TI }^{O}=\ST {\TI }{\TI }^{O} \) simply because in time
\( \TI  \) no more than \( \SP \exactly \TI  \) memory cells can be accessed
on a machine that performs \( \Exactly (1) \) operations per time step. Similarly,
if \( \TI \morethan2 ^{\SP } \), then \( \RST {\SP }{\TI }^{O}=\RST {\SP }{2^{\SP }}^{O} \)
and \( \ST {\SP }{\TI }^{O}=\ST {\SP }{2^{\SP }}^{O} \), because no computation
using only \( \SP  \) bits of memory can run for more than \( 2^{\SP } \)
steps without repeating. So part (a) reduces to proving \( \RST {\SP }{\TI }^{O}=\ST {\SP }{\TI }^{O} \)
only for the case where \( \SP \exactly \TI  \) or \( \TI \exactly 2^{\SP } \).

From here, the result follows due to the existing relativizable simulations.
When \( \SP \exactly \TI  \), Bennett's simple reversible simulation technique
\cite{Bennett-73} can be applied because it takes time \( \Atmost (\TI ) \)
and space \( \Atmost (\TI ) \). Similarly, when \( \TI \exactly 2^{\SP } \)
the simulation of Lange \etal{}\ {}\cite{Lange-etal-97} can be used because
it takes time \( \Atmost (2^{\SP }) \) and space \( \Atmost (\SP ) \). Both
techniques can be easily seen to relativize to any self-reversible oracle \( O \).
Thus, in both cases, any irreversible machine can be simulated reversibly in
\( \Atmost (\TI ) \) and space \( \Atmost (\SP ) \), and therefore \( \RST {\SP }{\TI }^{O}=\ST {\SP }{\TI }^{O} \).
\end{demo}

\begin{demo}{Part (b)}
(Case \( \SP \lessthan \TI \lessthan2 ^{\SP } \).) 
Here, we give only an outline of the full proof of part (b), which will be fleshed out in 
\S\S\ref{s:specdef}--\ref{s:lemma} below.  \textbf{Proof outline:} We will
construct \( A \) to be a permutation oracle that can be interpreted as specifying
an infinite directed graph of nodes with outdegree at most 1. We will also define
a corresponding language-recognition problem, which will be to report the contents
of a node that lies \( \TI /\SP  \) nodes down an incompressible linear chain
of nodes that have size-\( \SP  \) identifiers, starting from a node that is
determined by the input length. The oracle will be explicitly constructed via
a diagonalization, so that for each possible reversible machine, there will
be a particular input for which our oracle makes that particular reversible
machine take too much space or else get the wrong answer. In the cases where
the reversible machine takes too much space, we will prove this by equating
the machine's operation with the {}``pebble game{}'' for which Li and Vit\'{a}nyi
\cite{Li-Vitanyi-96b} have already proven lower bounds, and by showing that
if the machine does not take too much space, then we can build a shorter description
of the chain of nodes using the machine's small intermediate configurations,
thus contradicting our choice of an incompressible chain.

\end{demo}

\end{proof}

Before we can develop the proof of part (b) in full detail, we need
some more definitions specialized to our needs.

\subsection{Specialized definitions}\label{s:specdef}

\begin{definition}
A \defn{graph oracle} is a self-reversible permutation oracle with the
following property: There exists a partial function \( \funcdr
{f}{\bitset ^{*}}{\bitset ^{*}} \)\label{p:f-succfunc}, called a
\defn{successor function}, such that for any bit string (node) \( b\in
\bitset ^{*} \) for which \( f \) is defined, the oracle's permutation
function maps the tape contents \( b \) to the tape contents \( b\hash
f(b) \), and also maps \( b\hash f(b) \) back to \( b \), where
\hash{}\label{p:hash} is a special separator character in the oracle
tape alphabet. For all tape contents \( x \) not of either of these
forms, the oracle's permutation function maps them to themselves. See
fig.~\ref{f:grapho}.
\end{definition}

\begin{figure}
\centerline{\resizebox*{4in}{!}{\includegraphics{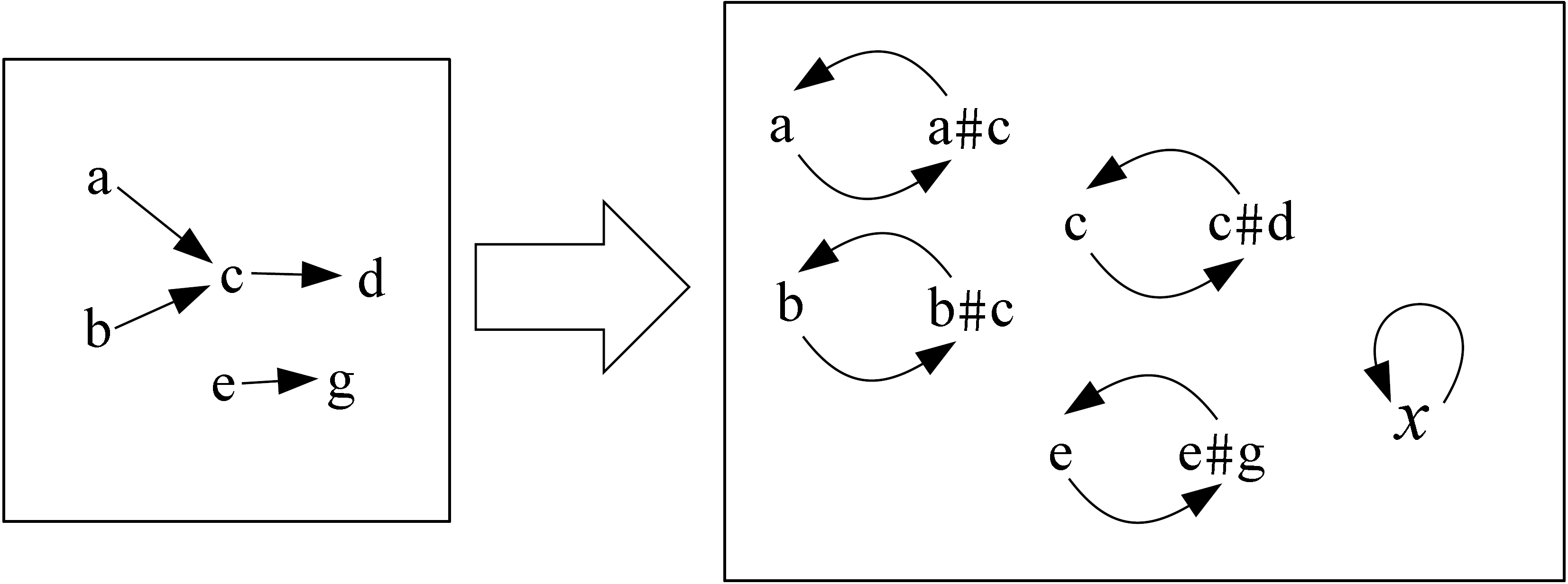}} }

\caption{\label{f:grapho}Encoding outdegree-1 directed graphs in self-reversible permutation
oracles. Letters stand for nodes represented as bit-strings, except for \protect\( x\protect \)
which represents any other bit-string not explicitly shown. The \texttt{\#}
is a special separator character.
\captionpars

On the left, we show an example of an outdegree-1 directed graph with bit-string
nodes abbreviated a,b,c,d,e,g. The graph function \( f \) gives the successor
of each node: \( f({\textrm{a}})={\textrm{c}} \), \( f({\textrm{c}})={\textrm{d}} \),
etc. This \( f \) is a partial function; \emph{e.g.}\ \( f({\textrm{d}}) \)
is undefined. For each edge in this graph, there is a corresponding pair of
strings that are mapped to each other by the self-reversible oracle. To represent
the edge \( \text {a}\rightarrow \text {c} \), for example, the permutation
oracle maps tape contents {}``a{}'' to {}``a\texttt{\#}c{}'' and maps {}``a\texttt{\#}c{}''
back to {}``a{}''. Any other string \( x \) (including those for terminal
nodes of the graph) is simply mapped to itself. In this way the permutation
oracle allows easily and reversibly looking up a node's successor, or uncomputing
a node's successor given the node and its successor. But finding a node's predecessor(s),
given just the node itself, is designed to be hard. Thus the oracle call resembles
the reversible computation of a {}``one-way{}'' invertible function that is
easy to compute, but whose inverse is difficult to compute.}

\end{figure}

\begin{remark}
The name ``graph oracle'' for this concept is really over-general; our
graph oracles are capable of embodying only graphs of a special type,
namely directed graphs in which all nodes are named by bit-strings and
have out-degree 1.  The unique node that is adjacent from node $q$ is
given by the successor function $f(q)$.
\end{remark}

Given that we will be working only with graph oracles, we can now
specify an oracle by specifying just the successor function \( f \)
that it embodies.  But before we actually construct the special oracle
\( A \) that proves theorem~\ref{thm:revsep}, let us define, relative
to \( A \), the language that we claim separates \( \RST {\SP }{\TI
}^{A} \) from \( \ST {\SP }{\TI }^{A} \).

\begin{definition}
\label{def:seplang}
Given two \( \SP \TI  \)-\discretionary{}{}{}con\-struct\-i\-ble
functions \( \SP (n) \), \( \TI (n) \), and graph oracle \( A \) with successor
function \( f \), we define the \defn{separator language} \( L(A) \)\label{p:L}
to be the language decided by the irreversible machine described by algorithm 1 in figure~\ref{fig:sepalg}.
\end{definition}

The algorithm is essentially this: Given a string of length \( n \),
construct a string of zeros of length \( \SP (n) \). Treat this string
as a node identifier, and use oracle queries to proceed down its chain
of successors for up to \( \lfloor \TI /\SP \rfloor \) nodes. Finally,
return the first bit of the final node's bit-string identifier.

\begin{figure}
\begin{algorithm}[SEPARATOR($w$)]
\label{alg:sep}
Given input string \(w\)\label{p:w},
\    Let $n$ = $|w|$; compute \(\SP=\SP(n),\TI=\TI(n)\).
\    Let bit-string \(b=\zerobit^{\SP}\).
\    Repeat the following, \(t=\lfloor \TI/\SP\rfloor\) times:
\        Write \(b\) on the oracle tape, and call the oracle $A$.
\        If r{e}sult is of the form \(b\){\texttt{\#}}\(c\), with \(c\) a bit-string,
\             assign \(b \leftarrow c\) (note that \(c=f(b)\)),
\        else, quit loop early.
\    Accept iff \(b[0]=\onebit\).
\end{algorithm}
\caption{Irreversible algorithm defining the language $L(A)$ that
separates \ST{\SP}{\TI} from \RST{\SP}{\TI}, relative to our
reversible oracle $A$.  The essence of this algorithm is simply to
interpret $A$ as a graph oracle, construct an initial node (which is
dependent on the input string), and follow the directed path leading
away from the initial node for a certain number of steps.
\label{fig:sepalg}
}
\end{figure}

We will be explicitly constructing the successor function \( f \) so
that it always returns a string of the same length as its input. Given
the corresponding oracle, algorithm 1 obviously requires only space \(
\Atmost (\SP ) \) and time \( \Atmost (\TI ) \) on on irreversible
machine in any standard serial model of computation. (Recall that \(
\SP ,\TI \) are \( \SP \TI \)-constructible.)  Therefore the language
\( L(A) \) will be in the class \( \ST {\SP }{\TI }^{A} \).

In \S\ref{s:construct}, we will show how to construct \( f \) so that
the language \( L(A) \) will not be computable by any reversible
machine that takes space \( \Atmost (\SP ) \) and time \( \Atmost (\TI
) \). The way we will do this is to make each of the node identifiers
be a different incompressible string. Intuition suggests that the only
way to decide \( L(A) \) is to actually follow the entire chain of
nodes, to see what the final one is. But having obtained a node's
successor, the reversible machine cannot easily get rid of its
incompressible records of the prior nodes. The graph oracle provides
no convenient way to compute \( f^{-1} \) and find a node's
predecessor, even if the successor function \( f \) happens to be
invertible. Thus (as we will show) the reversible machine will tend to
accumulate records of previous nodes, of size \( \SP (\nin ) \) each,
and thus, for sufficiently long enough chains, it will take more than
a constant factor times \( \SP (\nin ) \) space. The reversible
machine could conceivably find and uncompute predecessor nodes by
searching them all exhaustively, but this would take too much time.

The situation with this oracle language resembles the non-oracle problem of
iterating a one-way function, \ie{} an invertible function whose inverse much
is harder to compute than the function itself (\eg, MD5). Public-key cryptography
depends on the (unproven, but empirically reasonable) assumption that some functions
are one-way. The same assumption might allow us to show that \( \RST {\SP }{\TI }\neq \ST {\SP }{\TI } \)
without an oracle, by using a one-way function instead.

\subsection{Oracle construction}\label{s:construct}

We now construct a particular oracle \( A \) (given any appropriate
\TI,\SP) and prove that \( L(A)\notin \RST {\SP }{\TI }^{A} \).

First, fix some standard enumeration of all reversible oracle-querying
machines.  The enumeration is possible because reversible Turing
machines, for example, can be characterized by local syntactic
restrictions on their transition function, as in Lange \etal{}, so we
can enumerate all machines and pick out the reversible ones. Let \(
(\machine _{1},c_{1}),(\machine _{2},c_{2}),\ldots \)\label{p:enum} be
this enumeration dovetailed together with an enumeration of the
positive integers. If a given machine always runs in space \( \Atmost
(\SP ) \) and time \( \Atmost (\TI ) \) then it will eventually appear
in the enumeration paired with a large enough \( c_{i} \) so that the
machine \( \machine _{i} \) takes space less than \(
c_{i}+c_{i}\SP(\nin ) \) and time less than \( c_{i}+c_{i}\TI (\nin )
\) for any input length \( \nin \).

\begin{figure}
\centerline{\resizebox*{0.8\textwidth}{!}{\includegraphics{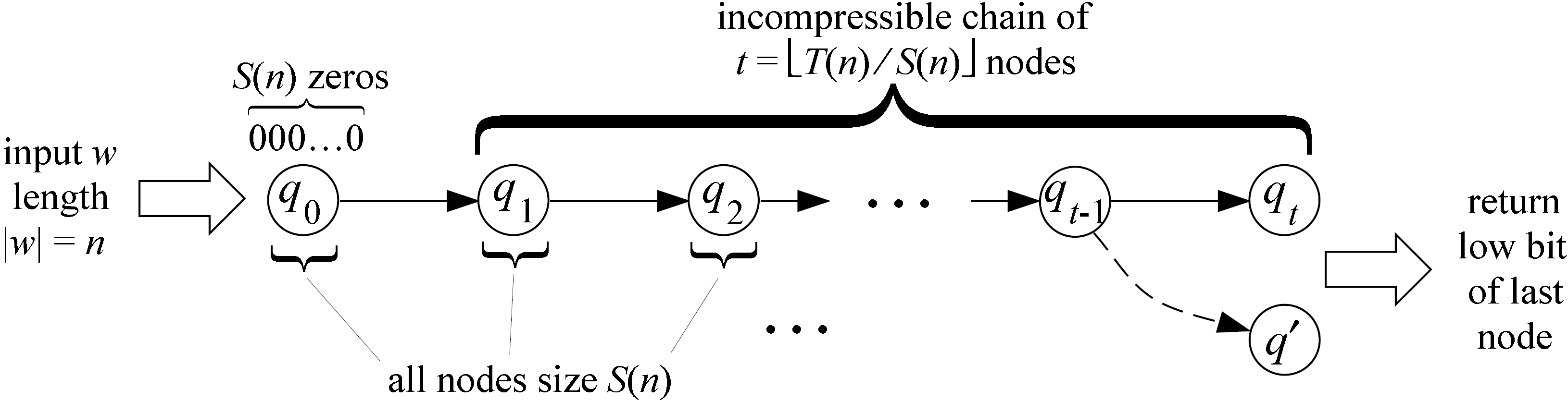}} }

\caption{\label{f:chain}The problem graph defined by our oracle for inputs of size
\protect\( n\protect \). The {}``correct answer{}'' is just the first bit
of the final node \protect\( q_{t}\protect \). If the reversible machine \protect\( M_{i}\protect \)
that we are trying to foil happens to get the right answer, but never asks for
the successor of node \protect\( q_{t-1}\protect \), we redefine \protect\( q_{t-1}\protect \)'s
successor to be a new node \protect\( q'\protect \) having a different initial
bit. }

\end{figure}

We will construct the oracle \( A \) so that each machine \( \machine _{i} \)
will fail to decide \( L(A) \) within these bounds. When considering \( \machine _{i} \),
\( f(q) \) will have already been specified for all oracle queries \( q \)\label{p:query}
asked by machines \( \machine _{1},\machine _{2}\ldots ,\machine _{i-1} \)
when given certain inputs of lengths \( n_{1},n_{2},\ldots ,n_{i-1} \)\label{p:ni},
respectively. Now, choose \( n_{i} \) (henceforth called \( n \)\label{p:n}),
the input length for which our oracle definition will foil \( \machine _{i} \),
to be such that \( \SP (n) \) is greater than the maximum length \( z \)\label{p:z}
of any of those earlier machines' oracle queries. Some other lower bounds on
the size of \( n \) will be mentioned as we go along, and are summarized in
table~\ref{t:n-constraints}.
\begin{table}
{\centering \begin{tabular}{cc}
\hline 
Constraint on \( n_{i} \)&
Introduced on\\
\hline 
\( \SP (n_{i})>z \)&
p.~\pageref{p:z}\\
\( |(j,k,x')|<|x| \) &
p.~\pageref{p:xprime}\\
\( \frac{1}{2}2^{\SP (n_{i})}>c_{i}+c_{i}\TI (n_{i}) \)&
p.~\pageref{p:altsavail}\\
\( |(j,\Delta \tau _{j},k_{j})|<\frac{1}{4}\SP (n_{i}) \)&
p.~\pageref{p:htriples}\\
\( t(n_{i})\geq 2^{4(c_{i}+1)} \)&
p.~\pageref{p:lotsapebs}\\
\( \SP (n_{i})\geq c_{i} \)&
p.~\pageref{p:spacebeatsconst}\\
\hline 
\end{tabular}\par}

\caption{\label{t:n-constraints}Constraints on the input length \nsubi\ chosen to foil
machine \protect\( M_{i}\protect \) running within bounds determined by \SP,
\TI, and \protect\( c_{i}\protect \).}

\end{table}

\ed{The proof can be simplified by requiring \(n>z\) above, rather than only
\(\SP(n)>z\), and then just letting the input \(w\) be \(q_{0}\) directly, rather
than computing \(q_{0}\) from it.  Then, \(\SP\) does not have to be
constructible or even computable (unless we want the oracle to be also)
since the first oracle call gives us \(\SP\) from \(w\).  Making this change
wouldn't be too hard.}

Later we will specify a description system \( \descsys _{i} \)\label{p:descsysi},
summarized in table~\ref{t:ds}, based on \( \machine _{i} \), \( c_{i} \),
the value of \( n \), and all the \( f(q) \) values defined so far (for bit-strings
smaller than \( \SP (n) \)). The description system will be a total computable
function, \ie, there is an algorithm that computes \( \descsys _{i}(d) \) for
any \( d \) and always halts. We will use this description system to define
\( f(q) \) for bit-strings \emph{q} of \emph{}length \( \SP (n) \), as follows:
\begin{table}
{\centering \begin{tabular}{lc}
\hline 
Description format&
Explained on\\
\hline 
\( (j,k,x') \)&
p.~\pageref{p:xprime}\\
\( (j,x') \)&
p.~\pageref{p:nozeros}\\
\( (C_{\tau },D,x', \) \emph{h} triples \( (j,\Delta \tau _{j},k_{j}) \), extra
bits)&
p.~\pageref{p:trickyone}\\
\( (j,\Delta \tau _{j},k_{j},x') \)&
p.~\pageref{p:simfromstart}\\
\hline 
\end{tabular}\par}

\caption{\label{t:ds}Description formats needed in description system \si.}

\end{table}

Let \( x \)\label{p:x} be a bit-string of length \( \TI (n) \) that is incompressible
in description system \( s_{i} \) (to be defined as we go along). This \( x \)
will be used as the sequence of size-\( \SP (n) \) node identifiers that will
define our graph for inputs of size \( n \).

Break \( x \) up into a sequence of \( t(n)\equiv \lfloor \TI /\SP
\rfloor \)\label{p:t} bit-strings of length \( \SP (n) \) each; call
these our graph nodes or \emph{query strings\/{}} \( q_{1},\ldots
,q_{t} \)\label{p:qj}. (Due to the floor operation, up to \( \SP -1 \)
bits may be left over; these aren't used in any query strings.)  We
will design our description system \( s_{i} \) so that all the \(
q_{j} \)'s\label{p:j} must be different.  We accomplish this by
allowing descriptions of the form \( (j,k,x')
\),\label{p:k-qindex}\label{p:xprime} where \( j \) and \( k \) are
the indices of two equal nodes \( q_{j}=q_{k} \), \( j<k \), and \( x'
\) is \( x \) with the \( q_{k} \) substring spliced out. The
description system would be defined to generate \( x \) from such a
description by simply looking up the string \( q_{j} \) in \( x' \)
and inserting a copy of it in the \( k \)th position. The indices \( j
\) and \( k \) would take \( \Atmost (\log (\TI /\SP )) \) space,
which is \( \Atmost (\log \TI ) \) space, which is \( o(\SP ) \)
space, whereas we are saving \( \SP (n) \) space by not explicitly
including the repetition of \( q_{j} \). Therefore as long as \( n \)
is sufficiently large, the total length of this description of \( x \)
would be less than \( |x| \). With \( x \) being incompressible in a
description system that permits such descriptions, we know that \(
q_{1},\ldots ,q_{t} \) includes no repetitions.

Now we can specify exactly how the oracle defines our problem graph
for inputs of size \( n \), as follows. Define query string \(
q_{0}={\texttt {0}}^{\SP } \)\label{p:q0} (a string of \( \SP \)
\zerobit{}-bits). Provisionally, set \( f(q_{j-1})=q_{j} \) for all \(
1\leq j\leq t \). These assignments are possible since all the \(
q_{j} \)'s are different, as we just proved. (They also must be
different from \( q_{0} \), but this is easy to ensure as well, using
descriptions of the form \( (j,x') \)\label{p:nozeros}.)  Given these
assignments, all strings of length \( n \) are in the language \( L(A)
\) if and only if \( q_{t}[0]={\texttt {1}} \) (where \( q_{t}[0] \)
means the first bit of \( q_{t} \)), due to the earlier definition of
\( L(A) \).  (Definition~\ref{def:seplang}.)

Suppose temporarily that our oracle definition were completed by
letting \( f \) remain undefined over all strings \( w \) for which we
have not yet specified \( f(w) \). (\Ie, let \( A(w)=w \) for these
strings.) Under that assumption, simulate \( M_{i} \)'s behavior on
the input \( {\texttt {0}}^{n} \). If \( M_{i} \) runs for more than
\( c_{i}+c_{i}T \) steps, then it takes too much time, and we are
through addressing it. Otherwise, \( M_{i} \) either accepts (\(
{\texttt {1}} \)) or rejects (\( {\texttt {0}} \)). If this answer is
different from \( q_{t}[0] \), then \( M_{i} \) already fails to
accept the language \( L(A) \), and we are through with it.

Alternatively, suppose \( M_{i} \)'s answer is correct with the given \( q_{j} \)'s
and it halts within \( c_{i}+c_{i}T \) steps. But now, suppose that \( M_{i} \)
never asked any query that was dependent on our choice of \( f(q_{t-1}) \)
during its run on input \( {\texttt {0}}^{n} \). That is, suppose \( M_{i} \)
never asked either query \( q_{t-1} \) or query \( q_{t-1} \)\texttt{\#}\( q_{t} \).
In that case, let us change our definition of \( f(q_{t-1}) \) as follows,
to change the correct answer to be the opposite of what \( M_{i} \) gave. Let
\( q' \)\label{p:qprime} be a bit-string that is independent of all queries
made by \( M_{i} \) in that simulation, and whose first bit is the opposite
of \( M_{i} \)'s answer. To ensure such strings exist, note there are \( \frac{1}{2}2^{\SP } \)
bit-strings of length \( \SP  \) having the desired initial bit, but \( M_{i} \)
can make at most \( c_{i}+c_{i}T \) queries since that is its running time.
We know \( \TI \lessthan 2^{\SP } \), so with sufficiently large \( n \),
\( \frac{1}{2}2^{\SP }>c_{i}+c_{i}\TI  \)\label{p:altsavail}, and we can find
our node \( q' \). Now, given \( q' \), we change \( f(q_{t-1}) \) to be
\( q' \). This cannot possibly affect the behavior of \( M_{i} \) since it
never asked about \( f(q_{t-1}) \). But the correct answer is changed to the
first bit of \( q' \), the new node number \( t \) in the chain. Thus with
this new partial specification of \( f \), \( M_{i} \) fails to correctly
decide \( L(A) \), and we can go on to foil other machines.

Finally, suppose \( M_{i} \) does ask query \( q_{t-1} \). We now show
how to complete the definition of our description system \( s_{i} \),
source of our incompressible \( x \), so that if \( M_{i} \) does ask
query \( q_{t-1} \), then it must at some point take more than \(
c_{i}+c_{i}\SP \) space.

To do this, we show that \( M_{i} \) can always be interpreted as following
the rules of Bennett's reversible {}``pebble game,{}'' introduced in \cite{Bennett-89}
and analyzed by Li and Vit\'{a}nyi in \cite{Li-Vitanyi-96b}. 

\paragraph{Pebble game rules}\label{p:pebgamrul}
The game is played on a linear chain of nodes, which we will identify with our
query strings \( q_{1},\ldots ,q_{t} \). At any time during the game some set
of nodes is \emph{pebbled.} Initially, no nodes are pebbled. At any time, the
\emph{player\/{}} (in our case, \( M_{i} \)) may, as a move in the game, change
the pebbled vs.\ unpebbled status of node \( q_{1} \) or any node \( q_{j} \)
for which the previous node \( q_{j-1} \) is pebbled. Only one such move may
be made at a time.

The idea of the pebbled set is that it corresponds to the set of nodes that
is currently {}``stored in memory{}'' by \( M_{i} \). (We will show how to
make this correspondence explicit.) We will show that pebbling or unpebbling
node \( q_{j} \) will require querying the oracle with query string \( q_{j-1} \)
or \( q_{j-1} \)\texttt{\#}\( q_{j} \), respectively. The goal of the pebble
game is to eventually place a pebble on the final node \( q_{t} \). This corresponds
to the fact (already established) that \( M_{i} \) must at some point ask query
\( q_{t-1} \), or the oracle we are constructing will foil it trivially.

\begin{figure}
\centerline{\resizebox*{0.8\textwidth}{!}{\includegraphics{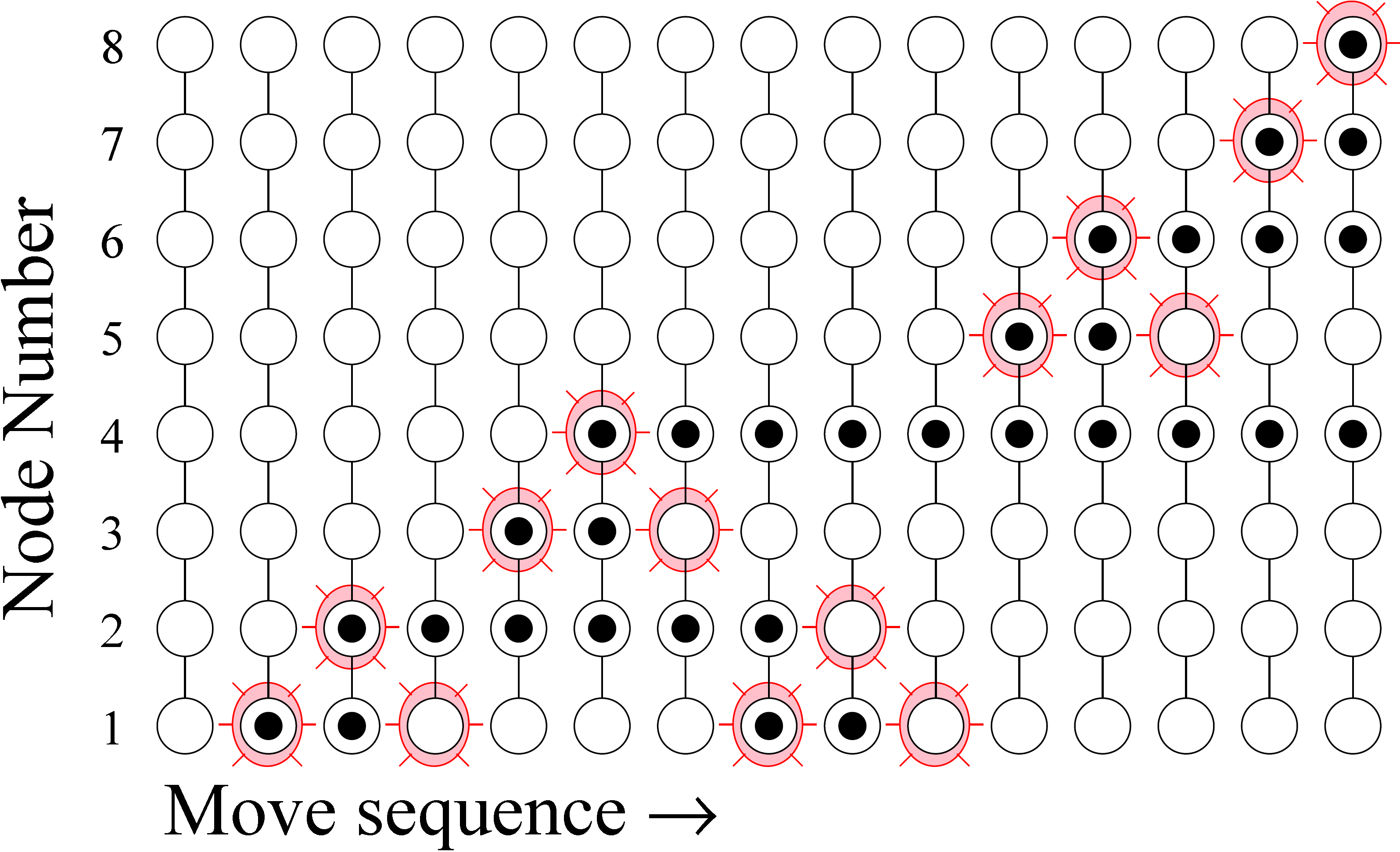}} }

\caption{\label{f:game}Bennett's reversible pebble game strategy. Highlights point
out the move made at each step. (Compare with fig.~\ref{f:bennett}(a), page~\pageref{f:bennett},
rotated \protect\( 90^{\circ }\protect \).)
\captionpars

A node \( q_{j} \) can be pebbled or unpebbled only if it is node \( q_{1} \)
or if the previous node \( q_{j-1} \) is pebbled. The strategy invented by
Bennett \cite{Bennett-89}, illustrated here, was shown by Li and Vit\'{a}nyi
to be optimal \cite{Li-Vitanyi-96a} in terms of the number of pebbles required.
But even with this optimal strategy, to pebble node \( 2^{k} \) we must at
some time have more than \( k \) nodes pebbled. In this example, we reach node
\( 2^{3}=8 \) but must use 4 pebbles to do so. (After pebbling node 8, we can
remove all pebbles by undoing the sequence of moves.) The fact that a constant-size
supply of pebbles can only reach upwards along the chain a constant distance
is crucial to our proof. }
\end{figure}

Li and Vit\'{a}nyi's analysis of the pebble game \cite{Li-Vitanyi-96b} showed
that no strategy can win the game for \( 2^{k} \)\label{p:kpebbles} nodes
or more without at some time having more than \( k \) nodes pebbled at once.
We will show that our machine \( M_{i} \) and its space usage can be modeled
using the pebble game, so that for some sufficiently large \( n \), the space
required to store the necessary number of pebbled nodes will exceed \( M_{i} \)'s
allowable storage capacity \( c_{i}+c_{i}\SP  \).

For the oracle \( A \) as defined so far, consider the complete sequence of
configurations of \( M_{i} \) given input \( {\texttt {0}}^{n} \), notated
\( C_{0},C_{2},\ldots ,C_{\TI' } \)\label{p:Ctau}\label{p:TIprime}, where
\( \TI' \leq c_{i}+c_{i}\TI  \) is \( M_{i} \)'s total running time, in terms
of the number of primitive operations (including oracle calls) performed.

Now, we need a couple of slightly more complex definitions.

\begin{definition}
\emph{(Previous and next queries involving a node.)}
For any time point \( \tau  \)\label{p:tau}, where \( 0\leq \tau \leq \TI'  \),
and for any node \( q_{j} \) in the chain of nodes \( q_{1},\ldots ,q_{t} \),
define \emph{the previous query involving\/ \( q_{j} \)} (written \( \textsc{prev}(q_{j}) \))\label{p:prev}
to mean the most recent oracle query in \( M_{i} \)'s history before time \( \tau  \)
in which the query string (the one that is present on the oracle tape at the
start of the query) is either $q_{j-1}$, $q_{j-1}{\texttt {\#}}q_{j}$, $q_{j}$, or $q_{j}{\texttt {\#}}q_{j+1}$.
There may of course be no such query, in which case \( \textsc{prev}(q_{j}) \)
does not exist. Similarly, define \emph{the next query involving\/ \( q_{j} \)}
(written \( \textsc{next}(q_{j}) \))\label{p:next} to mean the most imminent
such query in \( M_{i} \)'s future after time \( \tau  \).
\end{definition}

\begin{definition}
\label{def:pebtimes}
\emph{(A node being pebbled at a point in time.)}
Node \emph{\( q_{j} \) is pebbled at time\/ \( \tau  \)} iff at time \( \tau  \)
either:
\begin{enumerate}
\item[(a)]
\( \textsc{prev}(q_{j}) \) exists and is either
\begin{enumerate}
\item[(a.1)] \( q_{j-1} \),
\item[(a.2)] \( q_{j} \), or 
\item[(a.3)] \( q_{j} \)\texttt{\#}\( q_{j+1} \), or
\end{enumerate}
\item[(b)]
\( \textsc{next}(q_{j}) \) exists and is
\begin{enumerate}
\item[(b.1)] \( q_{j} \), 
\item[(b.2)] \( q_{j} \)\texttt{\#}\( q_{j+1} \), or
\item[(b.3)] \( q_{j-1} \)\texttt{\#}\( q_{j} \).
\end{enumerate}
\end{enumerate}
(With the exception that the
final node \( q_{t} \) is only considered pebbled in cases (a.1) and (b.3).)
\end{definition}

Note that this definition implies that \( q_{j} \) is \emph{not}
pebbled iff both \( \textsc{prev}(q_{j})=q_{j-1}{\texttt {\#}}q_{j} \)
(or nonexistent) and \( \textsc{next}(q_{j})=q_{j-1} \) (or
nonexistent).

\begin{figure}
\centerline{\resizebox*{0.75\textwidth}{!}{\includegraphics{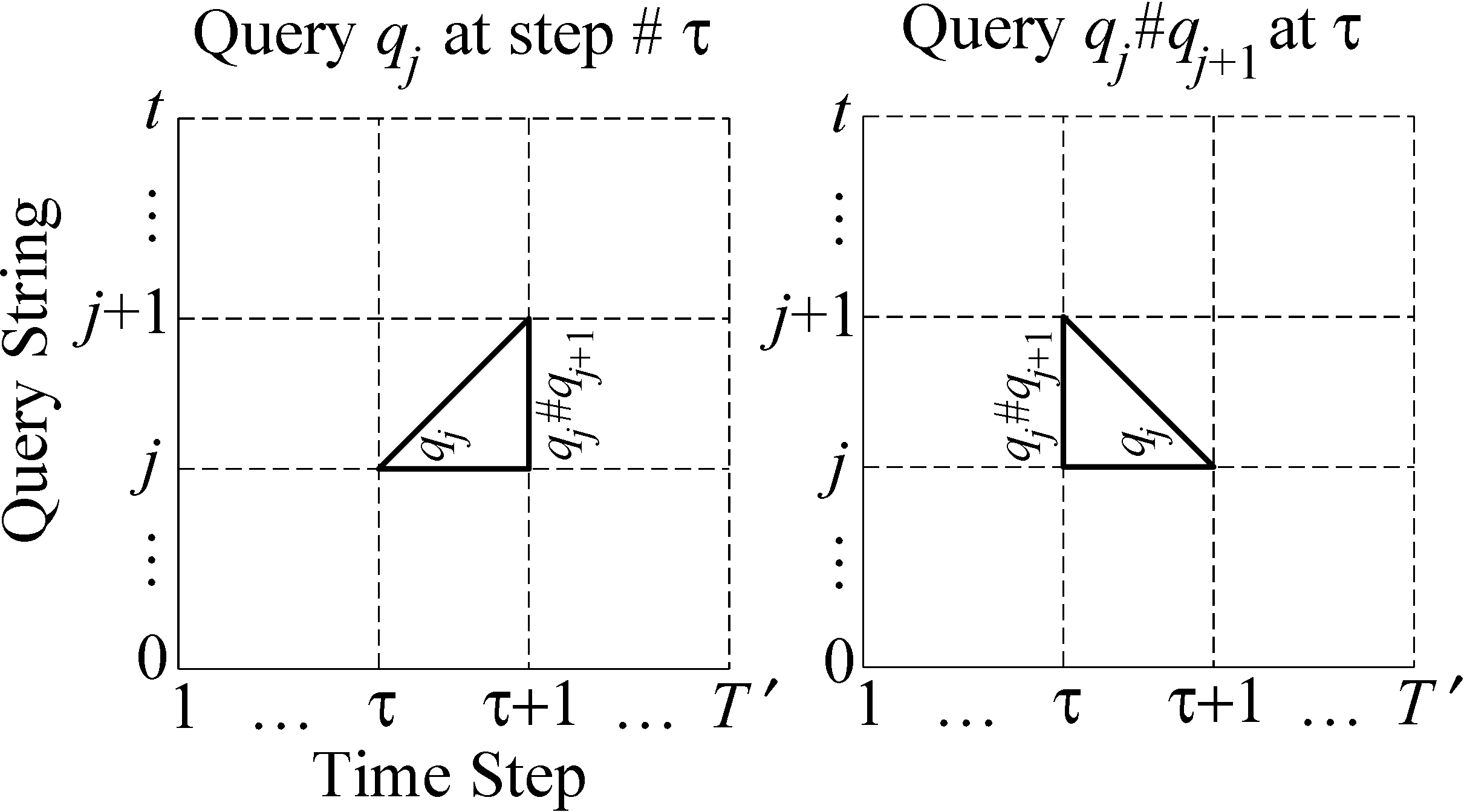}} }

\caption{\label{f:triang}Triangle representation of oracle queries.
\captionpars

The shape and direction of the triangle is meant to evoke the fact that at the
times just before and after an oracle query, the oracle tape contains the shorter
string \( q_{j} \) at one of the times, and the longer string \( q_{j} \)\texttt{\#}\( q_{j+1} \)
at the other time. The set of triangles defines the set of pebbled nodes at
any time, as illustrated in figure \ref{f:pebble}.}

\end{figure}

\begin{figure}
\centerline{\resizebox*{0.8\textwidth}{!}{\includegraphics{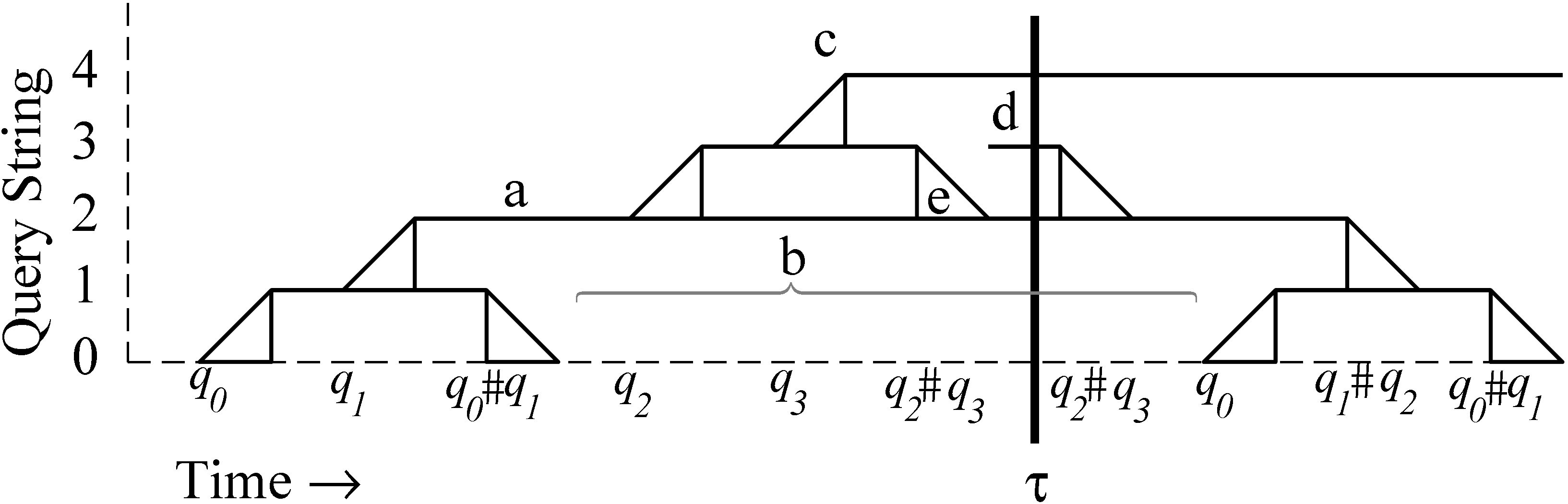}} }

\caption{\label{f:pebble}Visualizing the definition of the set of pebbled nodes.
\captionpars

The times at which a node is pebbled (indicated by solid horizontal lines on
the chart) are determined, by definition, solely by the identities and timing
of oracle queries and the corresponding arrangement of triangles (see fig.~\ref{f:triang})
on the chart. Each vertex of a triangle generates a line of pebbled times for
the corresponding node, extending horizontally away from the triangle until
it encounters another triangle. (Except query string 0 is never pebbled, because
it is not considered to be a node.)

The above example shows a pattern of queries similar to the one that would occur
if one tried to apply Bennett's \cite{Bennett-89} optimal pebble game strategy.
(Compare with figs.~\ref{f:game} and \ref{f:bennett}.)

Node 2 is considered pebbled at time (a) both because of the previous and next
queries (triangles) involving node 2. Node 1 is not pebbled at times (b) because
the previous and next queries are \( q_{0}{\texttt {\#}}q_{1} \) and \( q_{0} \)
respectively. Node 4 is pebbled at all times after (c) because even though there
is no next query involving node 4, the previous query involving node 4 exists
and is of the right form (\( q_{3} \)). Node 3 is pebbled at time (d) because
although the previous query (e) is of the wrong form (\( q_{2}{\texttt {\#}}q_{3} \)),
the next query is okay.

Query (e) does not change the set of pebbled nodes and so is not considered
to be a move in the pebble game. All the other queries are considered to be
pebbling or unpebbling moves in the pebble game, depending on the direction
of the corresponding triangle.

In the machine configuration \( C_{\tau } \) at time \( \tau  \), nodes 2,
3, and 4 are pebbled. But note that the query string for node 2 can be found
by simulating the machine backwards from time \( \tau  \) until query (e),
and reading \( q_{2} \) off of the oracle tape. And if \( q_{3} \) is given,
we can continue simulating backwards until we get to time (c), and read \( q_{4} \)
off the oracle tape as well. The ability to perform this sort of simulation,
for any arrangement of triangles, either forwards or backwards in time as needed
to find out more than a constant number of the pebbled nodes is what makes our
incompressibility argument work.}

\end{figure}

Figure \ref{f:pebble} illustrates the intuition behind this definition using
the graphical notation introduced in fig.~\ref{f:triang}. This graphical notation
is especially nice because it evokes the image of playing the pebble game or
running Bennett's algorithm (compare fig.~\ref{f:pebble} with figs.~\ref{f:game}
and \ref{f:bennett}).

The times at which a node is to be considered {}``pebbled{}'' during a
machine's execution are indicated by the solid horizontal lines on
\ref{f:pebble}. These times are determined, according to
definition~\ref{def:pebtimes} above, solely by the arrangement of
triangles (representing oracle queries, see fig.~\ref{f:triang}) on
the chart. Each vertex of a triangle generates a line of pebbled times
for the corresponding node, extending horizontally away from the
triangle until it hits another triangle.  Query string 0 is never
considered pebbled because it is not considered to be a node.

\subsection{Main Space-Bounding Lemma}\label{s:lemma}

Let \( p \)\label{p:p} denote the number of distinct nodes out of \( q_{1},\ldots ,q_{t} \)
that are pebbled at time \( \tau  \). We now lower bound the size of \( C_{\tau } \),
\emph{i.e.}\ \( M_{i} \)'s space usage at time \( \tau  \).

\begin{lemma}[Minimum space required to pebble \protect\( p\protect \) nodes]
\label{lemma:pebspace}
Given the preceding definitions, \( |C_{\tau }|>\frac{1}{4}p\SP . \) 
\end{lemma}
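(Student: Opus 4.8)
The plan is to prove the contrapositive by an incompressibility (counting) argument: assuming for contradiction that $|C_\tau| \le \frac14 p\SP$, I will exhibit a description of $x$ shorter than $|x| = \TI(n)$, contradicting the choice of $x$ as incompressible in $\descsys_i$. The whole point of Definition~\ref{def:pebtimes} is that a pebbled node is one whose $\SP$-bit identifier is recoverable from the machine's configuration $C_\tau$ together with only a small amount of bookkeeping; so if $p$ nodes are pebbled, $C_\tau$ must implicitly carry roughly $p\SP$ incompressible bits, which is what forces its size up.

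First I would halve the problem by direction. Each node counted in $p$ is pebbled because of a query lying in its past (clause (a), via $\textsc{prev}(q_j)$) or in its future (clause (b), via $\textsc{next}(q_j)$). Let $D$ be whichever of ``backwards''/``forwards'' accounts for at least half of the pebbled nodes, and let $h \ge p/2$ be the number it accounts for; I record $D$ in one bit and commit to recovering exactly these $h$ nodes. Next I would fix the description format $(C_\tau, D, x', h \text{ triples } (j,\Delta\tau_j,k_j), \text{extra bits})$, where $x'$ is $x$ with the $\SP$-bit substrings of the $h$ chosen nodes spliced out, and each triple names a chosen node $q_j$, the number $\Delta\tau_j$ of steps from $\tau$ (in direction $D$) to the query that pebbles it, and a small index $k_j$ locating that node's identifier on the oracle tape at that query. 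The decoder reconstructs $x$ by simulating $M_i$ from $C_\tau$ in direction $D$; upon reaching each recorded query it reads the $\SP$-bit string $q_j$ off the oracle tape (using $k_j$) and reinserts it into $x'$ at position $j$.

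The length accounting is then routine and gives the contradiction. By hypothesis $|C_\tau| \le \frac14 p\SP \le \frac12 h\SP$ (using $p \le 2h$); by the table constraint $|(j,\Delta\tau_j,k_j)| < \frac14\SP$ the $h$ triples cost $< \frac14 h\SP$; and splicing saves $h\SP$. Hence the total description length is below $\frac12 h\SP + (|x| - h\SP) + \frac14 h\SP + O(1) = |x| - \frac14 h\SP + O(1)$, which is strictly less than $|x|$ once $n$ (hence $\SP$) is large, contradicting incompressibility. (The case $p=0$ is trivial, since $|C_\tau|>0$.)

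The main obstacle is justifying that the decoding simulation actually runs — that $M_i$ can be re-simulated from $C_\tau$ toward each recorded query without circularly needing an unknown successor value $f(q_k)$ on a length-$\SP$ string \emph{before} that node has been read off. Here I would lean on self-reversibility: in either time direction the only ``hard'' oracle step is an expansion $q_{k-1} \mapsto q_{k-1}\hash q_k$ (which needs $f(q_{k-1})$), whereas a contraction $q_{k-1}\hash q_k \mapsto q_{k-1}$ is a trivial suffix-strip needing no knowledge of $f$; and crucially, every expansion, at the instant it is performed, exposes $q_k$ on the tape. The crux is to show, from the exact form of Definition~\ref{def:pebtimes} and the choice of $D$, that processing the recorded queries in order of increasing $\Delta\tau_j$ lets the decoder learn each needed successor value exactly when (or before) it is required, so the simulation never stalls and all $h$ identifiers are read off as claimed. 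Everything else — the total computability of $\descsys_i$ and the arithmetic above — is bookkeeping by comparison.
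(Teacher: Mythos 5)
Your proposal is correct and follows essentially the same route as the paper's proof: the same direction-halving choice of \( D \) with \( h\geq \frac{1}{2}p \), the same description format \( (C_{\tau },D,x',\ h \text{ triples } (j,\Delta \tau _{j},k_{j}),\ \text{extra bits}) \), the same length accounting against the incompressibility of \( x \), and the same decode-by-simulation idea. The ``crux'' you flag---that the simulation from \( C_{\tau } \) in direction \( D \) never stalls for want of an unknown successor value---is exactly the point the paper defers to and resolves in its separate decompression-algorithm section (\S\ref{s:alg}), along the same lines you sketch: any needed \( q_{j+1} \) is either stored from \( x' \) (not pebbled in direction \( D \)) or was already read off the tape at an earlier query in direction \( D \), since the current query is not of a form that could have pebbled it.
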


\begin{remark}
The constant \( \frac{1}{4} \) here is somewhat arbitrary, and with
straightforward generalization of the below proof this constant could
instead be replaced by \( \frac{1}{2}-\epsilon \) for any constant \(
\epsilon >0 \). We conjecture, but have not proven, that it could also
be replaced by any constant \( 1-\epsilon \).)
\end{remark}

\begin{proof}
Suppose \( C_{\tau } \) were no larger than \( \frac{1}{4}p\SP  \) bits. Then
we can show that \( x \) (the sequence of all \( q_{j} \)'s) is compressible
to a shorter description \( d \), which we will now specify. Our description
system \( s_{i} \) will be defined to process descriptions of the required
form.

First, note that for each node \( q_{j} \) that is pebbled at time \( \tau  \),
that node is pebbled either because of the previous query involving \( q_{j} \),
because of the next query involving \( q_{j} \), or both. Therefore, either
at least \( \frac{1}{2}p \) nodes are pebbled because of their previous query,
or at least \( \frac{1}{2}p \) nodes are pebbled because of their next query.
Let \( D \)\label{p:D} be a direction (forwards or backwards) from time \( \tau  \)
in which one can find queries causing \( h\geq \frac{1}{2}p \)\label{p:h}
nodes to be pebbled.

We now specify the shorter description \( d \) that describes \( x \). It
will contain an explicit description of \( C_{\tau } \), which by our assumption
is no longer than \( \frac{1}{4}p\SP  \). It will also specify the direction
\( D \) and contain a concatenation \( x' \) of all \( t-h \) of the nodes
\( q_{j} \) (for \( 1\leq j\leq t \)) that are \emph{not\/{}} pebbled because
of queries in direction \( D \). The size of \( x' \) will be \( (t-h)\SP  \).
For each of the \( h \) nodes \( q_{j} \) that \emph{are\/{}} pebbled because
of a query in direction \( D \), the description \( d \) will contain the
node index \( j \) and an integer \( \Delta \tau _{j} \)\label{p:Deltatauj}
giving the number of steps from step \( \tau  \) to the time of the query.\label{p:trickyone}
Also we include a short tag \( k_{j} \)\label{p:kj} indicating which of the
3 possible cases of queries causes the node to be pebbled. Each of the indices
\( j \) takes space \( \Atmost (\log t)\lessthan \log \TI \lessthan \SP  \),
and similarly each \( \Delta \tau _{j} \) takes space \( \Atmost (\log \TI )\lessthan \SP  \).
The tag is constant size. Thus for sufficiently large \( n \), all \( h\leq p \)
of the \( (j,\Delta \tau _{j},k_{j}) \) tuples together take less than \( \frac{1}{4}p\SP  \)
space\label{p:htriples}. The total space so far is less than \( t\SP  \).
If \( t\SP <\TI  \), then \( x \) will contain some additional bits beyond
the concatenation of \( q_{1}q_{2}\ldots q_{t} \), in which case \( d \) includes
those extra bits as well. The total length of \( d \) will still be less than
\( \TI =|x| \), as demonstrated in table~\ref{t:fancydesc}.
\begin{table}
{\centering \begin{tabular}{cc}
\hline 
Component of description \emph{d}&
Length\\
\hline 
\( C_{\tau } \)&
\( \leq \frac{1}{4}p\SP  \)\\
\emph{D}&
1 bit\\
\( x' \)&
\( (t-h)\SP \leq (t-\frac{1}{2}p)\SP  \)\\
\emph{h} triples \( (j,\Delta \tau _{j},k_{j}) \)&
\( h\cdot \Lessthan (\SP )=\Lessthan (\frac{1}{4}p\SP -1) \)\\
extra bits&
\( \TI -t\SP  \)\\
\hline 
TOTAL&
\( <\TI =|x| \) (for sufficiently large \emph{n})\\
\hline 
\end{tabular}\par}

\caption{\label{t:fancydesc}Size accounting for the description format \emph{d} used
to prove lemma~\ref{lemma:pebspace}.}
\end{table}

We now demonstrate that the description \( d \) is sufficient to reconstruct
\( x \), and give an algorithm for doing so. The function computed by this
algorithm tells how our description system \( s \) will handle descriptions
of the form outlined above.

The algorithm will work by simulating \( M_{i} \)'s operation in direction
\( D \) starting from configuration \( C_{\tau } \), and reading the identifiers
of pebbled nodes from \( M_{i} \)'s simulated oracle tape as it proceeds. We
can figure out which oracle queries correspond to which nodes by referring to
the stored times \( \Delta \tau _{j} \) and tags \( k_{j} \). Once we have
extracted the identifiers of all nodes pebbled in direction \( D \), we print
all the nodes out in the proper order.

As an example, refer again to fig.~\ref{f:pebble}. In the machine configuration
marked at time \( \tau  \), nodes 2, 3, and 4 are pebbled. But note that the
query string for node 2 can be found by simulating the machine backwards from
time \( \tau  \) until query (e), and reading \( q_{2} \) off of the oracle
tape. And if \( q_{3} \) is known, we can continue simulating backwards until
we get to time (c), and read \( q_{4} \) off the oracle tape as well. The ability
to perform this sort of simulation, for any arrangement of triangles, either
forwards or backwards in time as needed to find out at least half of the pebbled
nodes is what makes our incompressibility argument work. The algorithm is described
and verified in more detail in \S\ref{s:alg}.

Given \( d \), the algorithm produces \( x \), and with \( n \) chosen
large enough, the length of the description will be smaller than \( x
\) itself, contradicting the assumption of \( x \)'s incompressibility
relative to \( s \).  Therefore for these sufficiently large \( n \),
all configurations in which \( p \) nodes are pebbled must actually be
larger than \( \frac{1}{4}p\SP \).  This completes the proof of
lemma~\ref{lemma:pebspace}.
\end{proof}

\paragraph{Interpreting any \protect\( M_{i}\protect \) as playing the pebble game}
Now, given the definition of the set of pebbled nodes from earlier
(defn.~\ref{def:pebtimes}), it is easy to see how \( M_{i} \)'s
execution history can be interpreted as the playing of a pebble
game. Whenever \( M_{i} \) performs a query \( q_{j} \) and node \(
q_{j+1} \) was not already pebbled immediately prior to this query, we
say that \( M_{i} \) \emph{is pebbling node\/ \( q_{j+1} \)} as a move
in the pebble game. Similarly, whenever \( M_{i} \) performs a query
\( q_{j} \)\texttt{\#}\( q_{j+1} \) and node \( q_{j+1} \) is not
pebbled immediately after this query, we say that \( M_{i} \) \emph{is
unpebbling node\/ \( q_{j+1} \).} All other oracle queries and
computations by \( M_{i} \) are considered as pauses between pebble
game moves of these two forms. For example, in fig.~\ref{f:pebble},
query (e) (the first occurrence of \( q_{2}{\texttt {\#}}q_{3} \)) is
not considered a move in the pebble game, since it doesn't change the
set of pebbled nodes as defined by definition~\ref{def:pebtimes}.

It is obvious that under the above interpretation, all moves must obey the main
pebble game rule, \ie{}\ that the pebbled status of node \( q_{j} \) can only
change if \( j=1 \) or if node \( q_{j-1} \) is pebbled during the change.
The move is a query, and the presence of the query means the node \( q_{j-1} \)
is pebbled both before and after the query, by definition~\ref{def:pebtimes},
unless \( j=1 \) (we consider \( q_{0} \) not to be a node).

To show that no nodes are \emph{initially} pebbled (another pebble game rule)
takes only a little more work. Suppose that some nodes were pebbled in \( M_{i} \)'s
initial configuration, and consider a node \( q_{j} \) out of these that is
pebbled due to the \emph{earliest} query involving any of the initially-pebbled
nodes. Then a shorter description of \( x \) (for sufficiently large \( n \))
can be given as \( (j,\Delta \tau _{j},k_{j},x') \)\label{p:simfromstart},
where \( x' \) is \( x \) with \( q_{j} \) spliced out. This description
could be processed via simulation of \( M_{i} \) to produce \( x \) in much
the same way as in lemma~\ref{lemma:pebspace}, except that this time, the starting
configuration \( C_{1} \) can be produced directly from the known values of
\( M_{i} \) and \( n \), and need not be explicitly included in the description.
Of course the description system \( s \) needs to be able to process descriptions
of this form. Then the incompressibility of \( x \) in \( s \) shows that
the assumption that \( q_{j} \) is initially pebbled is inconsistent.

Thus, \( M_{i} \) can be seen as exactly obeying all the rules of the Bennett
pebble game. Now, Li and Vit\'{a}nyi have shown \cite{Li-Vitanyi-96b} that
any strategy for the pebble game that eventually pebbles a node at or beyond
node \( 2^{k} \) must at some time have at least \( k+1 \) nodes pebbled at
once. So let us simply choose \( n \) large enough so that \( t(n)\geq 2^{k} \)
for some \( k\geq 4(c_{i}+1) \)\label{p:lotsapebs}, and also so that \( \SP \geq c_{i} \)\label{p:spacebeatsconst}.
Then at times \( \tau  \) when \( p \) is maximum, \( M_{i} \)'s space usage
is (using lemma~\ref{lemma:pebspace}) \( |C_{\tau }|>\frac{1}{4}p\SP >\frac{1}{4}k\SP \geq (c_{i}+1)\SP \geq c_{i}+c_{i}\SP . \)

The above discussion establishes that machine \( M_{i} \) takes more than space
\( c_{i}+c_{i}\SP  \) if it correctly decides membership in \( L(A) \) for
inputs of length \( n_{i}=n \) and takes only time \( c_{i}+c_{i}\TI  \),
so long as the oracle \( A \) is consistent with the definition above. Since
machine \( M_{i} \)'s behavior on the input \( {\texttt {0}}^{n} \) only depends
on the values of the successor function \( f(b) \) for bit-strings \( b \)
up to a certain size (call it \( z \)), we are free to extend the oracle definition
to similarly foil machine \( M_{i+1} \) by picking \( n_{i+1} \) so that \( \SP (n_{i+1})>z \).
If one continues the oracle definition process in this fashion for further \( M_{i} \)'s
\emph{ad infinitum}, then for the resulting oracle, it will be the case that
for any \( M_{i} \) and constant \( c_{i} \) in the entire infinite enumeration,
the machine will either get the wrong answer or take more than time \( c_{i}+c_{i}\TI  \)
or space \( c_{i}+c_{i}\SP  \) on input \( {\texttt {0}}^{n_{i}} \). Thus,
no reversible machine can actually decide \( L(A) \) in time \( \Atmost (\TI ) \)
and space \( \Atmost (\SP ) \), and so \( L(A)\notin \RST {\SP }{\TI }^{A} \).

Note that this entire oracle construction, as described, is computable. If we
are given procedures for computing \( \SP (n) \) and \( \TI (n) \), we can
write an effective procedure that, given any finite oracle query, returns \( A \)'s
response to the query. The details of the oracle construction algorithm follow
directly from the above definition of \( A \), but would be too tedious to
present here. This concludes our proof of theorem~\ref{thm:revsep}.

Note that in the above proof, we used the fact that the number of pebbles required
to get to the final node grows larger than any constant as \( n \) increases.
But the actual rate of growth can be used as well, to give us an interesting
lower bound.

\subsection{Lower Bound Corollary}\label{s:corr1}

\begin{corollary}
\label{cor:lowerbound}
(Lower bound on space for linear-time relativizable reversible simulation
of irreversible machines.) For all \( \SP \TI  \)-constructible \( \SP ,\TI  \)
and computable \( \SP'  \)\label{p:SPprime} such that \( \SP \lessthan \TI \lessthan 2^{\SP } \)
and \( \SP' \lessthan \SP \log (\TI /\SP ) \), there exists a computable, self-reversible
oracle \( A \) such that \( \RST {\SP' }{\TI }^{A}\nsupseteq \ST {\SP }{\TI }^{A} \).
\end{corollary}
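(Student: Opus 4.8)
The plan is to recognize that the proof of part~(b) of Theorem~\ref{thm:revsep} already establishes this corollary almost word for word: the pebble-game analysis there produces a space lower bound of order $\SP\log(\TI/\SP)$, even though the separation argument itself only needed to beat $c_i+c_i\SP$. So I would re-use, unchanged, the computable self-reversible graph oracle $A$ and the separator language $L(A)$ built on a chain of $t=\lfloor\TI/\SP\rfloor$ mutually-incompressible node identifiers of length $\SP$ each. Since the hypothesis $\SP\lessthan\TI\lessthan2^{\SP}$ holds, that construction applies verbatim, and Algorithm~1 still witnesses $L(A)\in\ST{\SP}{\TI}^A$ using space $\Atmost(\SP)$ and time $\Atmost(\TI)$ on an irreversible machine; this half of the non-containment is completely unaffected by $\SP'$.

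For the other half, $L(A)\notin\RST{\SP'}{\TI}^A$, I would re-run the diagonalization over the enumeration $(\machine_i,c_i)$, now reading ``within bounds'' as space $c_i+c_i\SP'$ and time $c_i+c_i\TI$. The correctness portion of the argument carries over untouched, because it never mentions the space bound: if $\machine_i$ halts within $c_i+c_i\TI$ steps and decides $L(A)$ correctly, then it must query $q_{t-1}$ (otherwise we redefine $f(q_{t-1})$ to flip the answer, exactly as before). Having been forced to query $q_{t-1}$, the machine is playing Bennett's pebble game and must pebble the final node $q_t$. Taking $k=\lfloor\log_2 t\rfloor$, the Li and Vit\'{a}nyi lower bound \cite{Li-Vitanyi-96b} forces more than $k$ pebbles at some time $\tau$, so the pebble count satisfies $p>\log_2 t$; feeding this into Lemma~\ref{lemma:pebspace} yields $|C_\tau|>\frac{1}{4}p\SP>\frac{1}{4}(\log_2 t)\SP$, a bound of order $\SP\log(\TI/\SP)$ since $t=\lfloor\TI/\SP\rfloor$.

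The one genuinely new step is the arithmetic that replaces the constraint $t(n_i)\geq2^{4(c_i+1)}$ of the main proof. Here I would instead choose $n_i$ (effectively, using computability of $\SP'$) so large that $\frac{1}{4}(\log_2 t(n_i))\SP(n_i)>c_i+c_i\SP'(n_i)$. This is exactly what the hypothesis $\SP'\lessthan\SP\log(\TI/\SP)$ buys: the relations $\SP\lessthan\TI\lessthan2^{\SP}$ force $\SP,\TI\to\infty$ and hence $\SP\log_2(\TI/\SP)\to\infty$, while $\SP'=o(\SP\log(\TI/\SP))$ lets me take $n_i$ large enough that both $c_i$ and $c_i\SP'$ are small fractions of $\frac{1}{4}(\log_2 t)\SP$. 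Consequently $c_i+c_i\SP'<|C_\tau|$, so $\machine_i$ overruns its space allowance of $c_i+c_i\SP'$, and no reversible machine decides $L(A)$ in space $\Atmost(\SP')$ and time $\Atmost(\TI)$. Together with $L(A)\in\ST{\SP}{\TI}^A$ this gives $\RST{\SP'}{\TI}^A\nsupseteq\ST{\SP}{\TI}^A$.

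I do not anticipate a deep obstacle, since all the hard machinery---the incompressibility construction underlying Lemma~\ref{lemma:pebspace}, the modeling of $\machine_i$ as a pebble-game player, and the Li--Vit\'{a}nyi pebble bound---is inherited directly from the theorem. The only points deserving care are: first, confirming that enlarging the allowed space from $\SP$ to $\SP'$ cannot help the machine reach $q_t$ with fewer than $\log_2 t$ pebbles (it cannot, as the pebble bound is independent of the memory and time granted to the player); and second, checking that the $o$-condition on $\SP'$ really is strong enough for the space lower bound to dominate $c_i+c_i\SP'$ at a finite $n_i$ for \emph{every} fixed pair $(\machine_i,c_i)$, so that the diagonalization still disposes of each machine in turn.
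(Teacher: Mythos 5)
Your proposal is correct and follows essentially the same route as the paper's own (much terser) proof sketch: reuse the construction of Theorem~\ref{thm:revsep} part~(b) with the reversible machines' space allowance changed to $c_i+c_i\SP'$, invoke the Li--Vit\'anyi bound to get $p>\lfloor\lg\lfloor\TI/\SP\rfloor\rfloor$ pebbles, and feed that into Lemma~\ref{lemma:pebspace} to exceed the $\SP'$ budget for sufficiently large $n_i$. Your added arithmetic justifying why $\SP'\lessthan\SP\log(\TI/\SP)$ suffices to dispose of each $(\machine_i,c_i)$ is exactly the detail the paper leaves implicit.
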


\begin{proof}
(Sketch.) Essentially the same as theorem~\ref{thm:revsep} part (b),
but with \( \SP' \) in place of \( \SP \) in appropriate places. In
the last part of the proof, \( M_{i} \) is shown to take more than \(
c_{i}+c_{i}\SP' \) space by using lemma~\ref{lemma:pebspace}, together
with the fact that \( p>\lfloor \lg \lfloor \TI /\SP \rfloor \rfloor
\) pebbles are required to reach the final node.\end{proof}

This result implies that
any general linear-time simulation of irreversible machines by reversible ones
that is relativizable with respect to all self-reversible oracles must take
space \( \Atleast (\SP \log (\TI /\SP )) \).

The most space-efficient linear-time reversible simulation technique that is
currently known was provided by Bennett (\cite{Bennett-89}, p.~770), and analyzed
by Levine and Sherman \cite{LevSher90} to take space \( \Atmost (\SP (\TI /\SP )^{1/(0.58\lg (\TI /\SP ))}) \).
Bennett's simulation can be easily seen to work with all self-reversible oracles,
so it gives a relativizable upper bound on space. There is a gap between it
and our lower bound, due to the fact that the space-optimal pebble-game strategy
referred to in our proof takes \emph{more\/{}} than linear time in the number
of nodes. A lower bound on the number of pebbles used by \emph{linear\/{}} time
pebble game strategies would allow us to expand our lower bound on space, hopefully
to converge with the existing upper bound.

\section{Non-relativized separation}
\label{s:nonrel}

We now explain how the same type of proof can be applied to show a non-relativized
separation of \( \RST {\SP }{\TI } \) and \( \ST {\SP }{\TI } \) for a certain
slowly-growing space bound \SP, when inputs are accessed in a specialized way
that is similar to an oracle query, and the input size is not included in the
space usage.

\paragraph{Input framework}
Machine inputs will be provided in the form of a random-access read-only memory
\( I \)\label{p:IROM}, which may consist of \( 2^{b} \)\label{p:b-wordlen}
\( b \)-bit words for any integer \( b\geq 0 \). The length of this input
may be considered to be \( n(b)=b2^{b} \) bits; let \( b(n) \) be the inverse
of this function. The machine will have a special \emph{input access tape} which
is unbounded in one direction, initially empty, and is used for reversibly accessing
the input ROM via the following special operations.

\emph{Get input size.} If the input access tape is empty before this operation,
after the operation it will contain \( b \) written as a binary string. If
the tape contains \( b \) before the operation, afterwards it will be empty.
In all other circumstances, the query is a no-op.

\emph{Access input word.} If the input access tape contains a binary string
\( a \)\label{p:address} of length \( b \) before the operation, afterwards
it will contain the pair \( (a,I[a]) \) where \( I[a] \) is a length-\( b \)
binary string giving the contents of the input word located at address \( a \).
If the tape contains this pair before the operation, afterwards it will contain
just \( a \). Otherwise, nothing happens.

\begin{theorem}
\label{thm:nonrelsep}
\emph{(Non-relativized separation of reversible and irreversible spacetime.)}
For models using the above input framework, and for \( \SP (n)=b(n) \) and
any \( \SP \TI  \)-constructible \( \TI (n) \) such that \( \SP \lessthan \TI \lessthan 2^{\SP } \),
\( \RST {\SP }{\TI }\neq \ST {\SP }{\TI } \).
\end{theorem}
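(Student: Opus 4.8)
The plan is to follow Theorem~\ref{thm:revsep}(b) almost verbatim, letting the read-only input memory $I$ play the role of the graph oracle $A$ and the \emph{access input word} primitive play the role of the reversible oracle query. I would first pin down the separating language: let $r(I)$ be the bit produced by initializing $a=\zerobit^{b}$ and iterating the dereference $a\leftarrow I[a]$ exactly $t=\lfloor\TI/\SP\rfloor$ times (using \emph{get input size} to learn $b=\SP$, then \emph{access input word} to read each pointer), and finally reading the first bit of the last word reached; define $L=\{I : r(I)=\onebit\}$. Membership $L\in\ST{\SP}{\TI}$ is immediate, since an irreversible machine can walk the chain holding only the current $b$-bit word and an $\Atmost(\log\TI)=\Atmost(\SP)$-bit step counter, for total work space $\Atmost(\SP)$ and time $\Atmost(\TI)$; crucially the ROM itself, of length $n=b2^{b}\gg\TI$, is not charged to the space bound, which is exactly the loophole the theorem exploits.

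The substance is showing $L\notin\RST{\SP}{\TI}$. Suppose some reversible machine $\machine$, with constant $c$, decided $L$ in space $c+c\SP$ and time $c+c\TI$ on all inputs. For a sufficiently large word length $b$ (hence input length $n=b2^{b}$), I would choose a string $x$ of length $\TI$ that is incompressible in the description system $s$ built from $\machine$, $c$, and $n$, and cut it into $t$ distinct $b$-bit words $w_{1},\ldots,w_{t}$ --- distinctness being forced by the same $(j,k,x')$ and $(j,x')$ description formats used before. I would then define the hard input $I$ by setting $I[w_{j-1}]=w_{j}$ for $1\le j\le t$ (with $w_{0}=\zerobit^{b}$) and filling every remaining address with a fixed filler value such as $\zerobit^{b}$; this is a legal total memory because $t=\lfloor\TI/\SP\rfloor<2^{\SP}=2^{b}$, so the chain fits among the $2^{b}$ words. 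Under this dictionary, accessing $I[w_{j-1}]$ to learn $w_{j}$ is exactly a pebbling of node $w_{j}$ and reversing that access is an unpebbling, so $\machine$'s history is again an instance of Bennett's pebble game. Moreover, because $\machine$ can touch at most $c+c\TI<2^{b}$ words within its time budget, it can neither scan the whole ROM nor invert a pointer by exhaustive search, so the one-way property that makes predecessors expensive to recover survives the move from oracle to read-only input.

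With the correspondence in hand, the three-way case analysis of part (b) and the space lower bound of Lemma~\ref{lemma:pebspace} transfer essentially unchanged. If $\machine$ runs too long, or prints the wrong bit of $w_{t}$, it already fails on $I$. If it answers correctly but never accesses $I[w_{t-1}]$, I would splice in a fresh word $w'$ whose first bit is the opposite of $\machine$'s answer --- such a $w'$ exists because $\machine$ touches at most $c+c\TI<\frac12 2^{b}$ words (using $\TI\lessthan 2^{\SP}$) whereas $\frac12 2^{b}$ words share any prescribed first bit --- and redefine $I[w_{t-1}]=w'$, producing a second legal input $I'$ with $r(I')\neq\machine(I')$. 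Otherwise $\machine$ genuinely reaches $w_{t}$, so the Li--Vit\'{a}nyi bound (reaching node $t$ forces more than $\lfloor\lg t\rfloor$ simultaneous pebbles) together with the analog of Lemma~\ref{lemma:pebspace} forces a configuration larger than $\frac14 p\SP > c+c\SP$ once $n$ is large enough that $\lfloor\lg t\rfloor\ge 4(c+1)$ and $\SP\ge c$. In every branch $\machine$ errs on some ROM, contradicting that it decides $L$; hence $L\notin\RST{\SP}{\TI}$.

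Two remarks about where the real work lies. Pleasantly, there is no diagonalization over an enumeration of machines here: because $L$ is defined by the intrinsic rule $r(I)$ rather than by a constructed object, each machine is defeated independently by its own incompressible input, so the oracle proof's bookkeeping constraint $\SP(n_{i})>z$ simply evaporates. The genuine obstacle is instead to recheck that the incompressibility reconstruction underlying Lemma~\ref{lemma:pebspace} still runs when the successor structure lives in a read-only input rather than a self-reversible oracle: I must confirm that replaying $\machine$ in the chosen direction $D$ from the stored configuration $C_{\tau}$ can answer every intermediate \emph{access input word} query using only the non-pebbled words carried in $x'$ and the known filler $\zerobit^{b}$, while the tracked queries reveal the pebbled words off the access tape. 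The one place the read-only-memory model must be argued rather than quoted is that this fixed filler neither furnishes a shorter description of $x$ nor opens an alternate directed path out of $\zerobit^{b}$, so that following the incompressible chain remains the only way to compute $r(I)$.
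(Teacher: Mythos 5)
Your proposal is correct and follows essentially the same route as the paper's own proof: the same iterated-dereference language $L$, the same incompressible chain $w_1,\ldots,w_t$ embedded in a ROM with $\zerobit^{b}$ filler, the same pebble-game correspondence with input accesses in place of oracle calls, and the same appeal to the analog of Lemma~\ref{lemma:pebspace} plus the Li--Vit\'anyi bound to contradict the $c+c\SP$ space budget. Your counting argument for the replacement word $w'$ (needed so that no earlier no-op access of the form $(w_{t-1},w')$ changes behavior under the modified input) is in fact slightly more careful than the paper's one-line "flip the first bit of $I[w_{t-1}]$," and your observation that the diagonalization over machines evaporates matches the paper's single-machine contradiction structure.
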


\begin{proof}
\emph{(Sketch following proof of theorem~\ref{thm:revsep}.)} For input \( I \) of length \( n=b2^{b} \),
define result bit \( r(I) \)\label{p:result} to be the first bit in the \( b \)-bit
string given by \[
\underbrace{I[I[\ldots I[}_{\lfloor \TI /\SP \rfloor }{\texttt {0}}^{b}]\ldots ]].\]
 Let language \( L=\{I:r(I)={\texttt {1}}\} \)\label{p:Lnonrel}. \( L\in \ST {\SP }{\TI } \)
because an irreversible machine can simply follow the chain of \( \lfloor \TI /\SP \rfloor  \)
pointers from address \( {\texttt {0}}^{b} \), using space \( \Atmost (\SP ) \)
(not counting the input) and time \( \Atmost (\TI ) \).

Assume there is a reversible machine \( M \)\label{p:Mnonrel} that decides
\( L \) in \( c+c\SP  \)\label{p:cnonrel} space and \( c+c\TI  \) time for
some \( c \). Let \( b \) be sufficiently large for the proof below to work.
Let \( s \)\label{p:snonrel} be a certain description system to be defined.
Let \( t=\lfloor \TI /\SP \rfloor . \) Let \( x \) be a length-\( t\SP  \)
string incompressible in \( s \). Let \( w_{1}\ldots w_{t}=x \)\label{p:wi-nonrel}
where all \( w_{i} \)\label{p:i-nodeindex} are size \( b \). Restrict \( s \)
so that all the words \( w_{i} \) must be different from each other and from
\( 0^{b} \). Let \( I \) be an input of length \( n=b2^{b} \) such that \( I[{\texttt {0}}^{b}]=w_{1} \),
and \( I[w_{i}]=w_{i+1} \) for \( 1\leq i<t \), and \( I[a]={\texttt {0}}^{b} \)
for every other address \( a \). \( M \) must at some time access \( I[w_{t-1}] \)
because otherwise we could change the first bit of \( I[w_{t-1}] \) to be the
opposite of whatever \( M \)'s answer is, and \( M \) would give the wrong
answer. Assign a set of pebbled nodes to each configuration of \( M \)'s execution
on input \( I \) like in the oracle proof, except that this time, input access
operations take the place of oracle calls. Show, as in lemma~\ref{lemma:pebspace},
that the size of any of these configurations is at least \( \frac{1}{4}p\SP  \)
where \( p \) is the number of pebbled nodes, by defining \( s \) to allow
descriptions that are interpreted by simulating \( M \) forwards or backwards
and reading pebbled nodes from the input access tape. As before, the machine
must therefore take space \( \Atleast (\SP \log (\TI /\SP )) \), which for
sufficiently large \( n \) contradicts our assumption that the space is bounded
by \( c+c\SP  \). Thus \( L\notin \RST {\SP }{\TI } \).
\end{proof}

\begin{corollary}
\emph{Non-relativized lower bound on space for linear-time reversible simulations.}
For \( \SP =b(n) \), computable \( \SP' \lessthan \SP \log (\TI /\SP ) \),
and \( \SP \TI  \)-constructible \( \TI (n) \) such that \( \SP \lessthan \TI \lessthan 2^{\SP } \),
\( \RST {\SP' }{\TI }\nsupseteq \ST {\SP }{\TI } \). 
\end{corollary}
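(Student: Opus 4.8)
The plan is to mirror the proof of Theorem~\ref{thm:nonrelsep} verbatim, changing only the assumed space bound from $\SP$ to $\SP'$ and sharpening the pebble count, exactly as Corollary~\ref{cor:lowerbound} sharpens Theorem~\ref{thm:revsep}(b). First I would keep the language $L$ and the incompressible input $I$ constructed in the proof of Theorem~\ref{thm:nonrelsep} entirely unchanged: the node identifiers $w_1,\ldots,w_t$ are still the size-$\SP$ (with $\SP=b(n)$) pieces of a string $x$ incompressible in the total computable description system $s$, and $I$ still encodes the linked list $I[\zerobit^b]=w_1$, $I[w_i]=w_{i+1}$, with $I[a]=\zerobit^b$ elsewhere. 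Since the irreversible pointer-chasing algorithm runs in space $\Atmost(\SP)$ (not counting the input) and time $\Atmost(\TI)$, we still have $L\in\ST{\SP}{\TI}$, so it only remains to show $L\notin\RST{\SP'}{\TI}$.

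Next I would assume, for contradiction, that some reversible machine $M$ decides $L$ in space $c+c\SP'$ and time $c+c\TI$, with $b$ (hence $n=b2^b$) taken large. As in Theorem~\ref{thm:nonrelsep}, $M$ must at some step access $I[w_{t-1}]$ — otherwise we could flip the first bit of $I[w_{t-1}]$ and make $M$ wrong — so under the pebble-game interpretation, with input-access operations playing the role of oracle calls, $M$ must eventually pebble the final node $q_t$. The essential new ingredient is the \emph{quantitative} form of Li and Vit\'{a}nyi's bound~\cite{Li-Vitanyi-96b}: to pebble a node at or beyond position $2^k$ one must at some instant have more than $k$ pebbles placed. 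Taking $k=\lfloor\lg t\rfloor=\lfloor\lg\lfloor\TI/\SP\rfloor\rfloor$, this forces a time $\tau$ at which the number of distinct pebbled nodes satisfies $p>\lfloor\lg\lfloor\TI/\SP\rfloor\rfloor$.

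Then I would invoke Lemma~\ref{lemma:pebspace}, whose compression argument carries over unchanged to this model — the proof of Theorem~\ref{thm:nonrelsep} already re-establishes it, reading the identifiers of pebbled nodes off the input-access tape rather than an oracle tape — to conclude $|C_\tau|>\tfrac14 p\SP>\tfrac14\SP\lfloor\lg\lfloor\TI/\SP\rfloor\rfloor=\Atleast(\SP\log(\TI/\SP))$. Because $\SP'\lessthan\SP\log(\TI/\SP)$, both the additive constant $c$ and the term $c\SP'$ are $\lessthan\SP\log(\TI/\SP)$ (the former since $\SP\log(\TI/\SP)\to\infty$, the latter since little-$o$ is preserved by multiplication by the constant $c$), so the assumed bound $c+c\SP'$ is eventually dominated by $\tfrac14\SP\lfloor\lg\lfloor\TI/\SP\rfloor\rfloor$. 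Hence for sufficiently large $n$ the peak space $|C_\tau|$ exceeds $c+c\SP'$, contradicting the assumed space bound. Therefore no such $M$ exists, $L\notin\RST{\SP'}{\TI}$, and $\RST{\SP'}{\TI}\nsupseteq\ST{\SP}{\TI}$.

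I expect the only genuine work to lie in the third step: confirming that the quantitative pebbling lower bound $p>\lfloor\lg\lfloor\TI/\SP\rfloor\rfloor$ applies cleanly to the play induced by $M$, and that Lemma~\ref{lemma:pebspace}'s incompressibility descriptions (an explicit $C_\tau$, a direction bit $D$, the concatenation $x'$ of the nodes not pebbled in direction $D$, $h$ index/offset/tag triples, and the leftover bits), reinterpreted for input-access operations, remain valid and within their length budget when $b=b(n)$ is this particular slowly-growing function. Everything else is a transcription of the existing proofs with $\SP'$ substituted for $\SP$ in the assumed space bound.
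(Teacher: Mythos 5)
Your proposal is correct and is essentially the paper's own proof, which is stated there only as a one-line sketch ("as in Corollary~\ref{cor:lowerbound}, but with Theorem~\ref{thm:nonrelsep}"): you combine the non-relativized construction and pebbling interpretation of Theorem~\ref{thm:nonrelsep} with the quantitative bound $p>\lfloor\lg\lfloor\TI/\SP\rfloor\rfloor$ and Lemma~\ref{lemma:pebspace} exactly as Corollary~\ref{cor:lowerbound} does in the oracle setting. Your expansion of the domination argument ($c+c\SP'$ eventually below $\tfrac14\SP\lfloor\lg\lfloor\TI/\SP\rfloor\rfloor$) is a faithful filling-in of what the paper leaves implicit.
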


\begin{proof}
\emph{(Sketch.)} As in corollary~\ref{cor:lowerbound}, but with theorem~\ref{thm:nonrelsep}.
\end{proof}

Such a \( \TI  \) exists because \( b \) can be found in space and time \( \Atmost (\log b) \)
using the {}``get input size{}'' operation, after which \( \TI =b^{2} \),
for example, can be found in space \( \Atmost (\log b) \) and time \( \Atmost (\log ^{2}b) \).

\begin{corollary}
Thus, any reversible machine that simulates irreversible ones without asymptotic
slowdown takes \( \Atleast (\SP \log (\TI /\SP )) \) space in some cases, given
the type of input model presented in this section. 
\end{corollary}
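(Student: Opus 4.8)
The plan is to read this corollary as a contrapositive repackaging of the preceding corollary, so that almost all of the technical work is already in hand; what remains is to translate the class non-containment $\RST{\SP'}{\TI}\nsupseteq\ST{\SP}{\TI}$ into a statement about a hypothetical simulation. First I would pin down the intended meaning of the hypotheses. I read ``simulates irreversible ones without asymptotic slowdown'' as: a reversible procedure that, given any irreversible computation running in time $\Atmost(\TI)$ and space $\Atmost(\SP)$ within the input framework of this section, decides the same language reversibly in time $\Atmost(\TI)$. I read ``takes space in some cases'' existentially, matching the ``there exists a separating language'' flavor of Theorem~\ref{thm:nonrelsep}. I would then invoke the remark immediately preceding the statement to fix a concrete admissible pair, e.g.\ $\SP=b(n)$ and $\TI=b^{2}$, so that $\SP\lessthan\TI\lessthan2^{\SP}$ holds and both functions are $\SP\TI$-constructible; this guarantees the hypotheses of the preceding corollary are genuinely met, which is the only point where an existence claim (rather than a mere rephrasing) is needed.

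Next I would argue by contradiction. Suppose some universal time-preserving reversible simulation always achieved space bounded by a computable $\SP'$ with $\SP'\lessthan\SP\log(\TI/\SP)$. Apply it to the irreversible machine of Theorem~\ref{thm:nonrelsep} that decides the separator language $L$, which by construction lies in $\ST{\SP}{\TI}$. The simulation would then decide $L$ reversibly in time $\Atmost(\TI)$ and space $\Atmost(\SP')$, so $L\in\RST{\SP'}{\TI}$. But the preceding corollary produces exactly such an $L\in\ST{\SP}{\TI}$ that is \emph{not} reversibly decidable within those $\SP'$, $\TI$ bounds, i.e.\ $\RST{\SP'}{\TI}\nsupseteq\ST{\SP}{\TI}$ for this very range of $\SP'$ --- a contradiction. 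Hence no such space bound $\SP'$ can dominate the simulation's worst-case space; equivalently, its space usage cannot satisfy $g\lessthan\SP\log(\TI/\SP)$, which is the sense in which it must be $\Atleast(\SP\log(\TI/\SP))$ on some inputs, as claimed.

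The main obstacle here is interpretive rather than mathematical: the corollary is a prose summary, so the real task is choosing formalizations of ``without asymptotic slowdown'' and ``in some cases'' that are faithful to the paper yet make the implication immediate from the preceding corollary. The one genuinely substantive check is that a single simulation beating the space bound uniformly would in particular beat it on the specific witness family, which holds because the separating language is itself a member of $\ST{\SP}{\TI}$ and the simulation is assumed to apply to every such computation; together with the existence of an admissible $\TI$ (the content of the remark just above), the conclusion follows with no new construction.
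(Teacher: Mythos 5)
Your proposal is correct and matches the paper exactly: the paper offers no separate proof for this corollary, presenting it (via "Thus") as precisely the contrapositive repackaging of the preceding corollary that you describe — a uniform, time-preserving reversible simulation with space \( \Atmost (\SP ') \) for some computable \( \SP '\lessthan \SP \log (\TI /\SP ) \) would give \( \ST {\SP }{\TI }\subseteq \RST {\SP '}{\TI } \), contradicting the non-containment \( \RST {\SP '}{\TI }\nsupseteq \ST {\SP }{\TI } \). Your use of the remark just above the statement (e.g.\ \( \SP =b(n) \), \( \TI =b^{2} \)) to guarantee an admissible pair is likewise the paper's intended reading.
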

Again, we emphasize that
this particular lower bound is probably not tight.

\noheading

We should also note that this particular non-relativized result is not very
compelling, because with a space bound that is much less than the input size,
the space usage is unlikely to reflect a dominant component of system cost for
real-world applications.

\section{Decompression algorithm}\label{s:alg}

It is probably not obvious to the reader that the algorithm that we
briefly mentioned in the proof of lemma~\ref{lemma:pebspace} in
\S\ref{s:lemma} can be made to work properly. In this section we give
the complete algorithm and explain why it works.

The algorithm, shown in figure \ref{a:decomp}, essentially just simulates \( M_{i} \)'s
operation in direction \( D \) starting from configuration \( C_{\tau } \),
and reads the identifiers of the pebbled nodes off of \( M_{i} \)'s simulated
oracle tape. The bulk of the algorithm is in the details showing how to simulate
all oracle queries correctly.

There is a small subtlety in the fact that this algorithm has, built into it,
some of the values of \( f \) that are defined by the oracle. Yet the algorithm
is part of the definition of our description system \( s_{i} \), which is used
to pick \( x \) and define the \( f(q_{j}) \) values. This would be a circularity
that might prevent the oracle from being well-defined, if not for the fact that
the portion of \( f \) that is built in, that is, \( f(b) \) for \( |b|<\SP  \),
is disjoint from the portion of \( f \) that depends on this algorithm, that
is, only values of \( f(b) \) for \( |b|\geq \SP (n_{i}) \). Thus there is
no circularity.

The \( f() \) values for the entire infinite oracle can be enumerated by enumerating
all values of \( i \) in sequence, and for each one, computing the appropriate
values of \( M_{i} \) and \( c_{i} \), and choosing an \( n_{i} \) that satisfies
all the explicit and implicit lower bounds on \( n \) that we mentioned above.
Then, \( n_{i} \) is used in the above algorithm to allow us to define \( s_{i} \)
and choose the appropriate \( x \), which determines \( f(b) \) for all \( b \)
where \( |b|=\SP (n_{i}) \); these values of \( f \) can then be added to
the table for use in the algorithm later when running on higher values of \( i \).

We now explain why the simulation carried out by the (oracle-less) decompression
algorithm imitates the real oracle-calling program exactly. When we come to
an oracle query operation where the queried bit-string(s) do not appear in our
\( q[j] \) array and do not have a matching \( \Delta \tau _{j} \), then we
know the bit-string(s) must not correspond to a real node in \( q_{1},\ldots ,q_{t} \),
because if they did, then either they were not pebbled due to queries in direction
\( D \), in which case they would have been in the description \( d \) and
would have been present in the initial \( q \) array, or else the first query
that involved them must have been before the current one (or else some \( \Delta \tau _{j} \)
would match), in which case they would have been added to the \( q \) array
earlier.

Moreover, when we get to a single query \( q_{j} \), we know we can look up
\( q_{j+1} \) to answer the query, because it must already have been stored.
Either \( q_{j+1} \) was not pebbled in direction \( D \) in which case it
was stored originally, or it was pebbled in direction \( D \) in which case
the first query involving it must have been before this one, since this query
is not of the type that would have caused the node to be pebbled in direction
\( D \). In either case we will already have a value in array entry \( q[j+1] \).

Given any description \( d \) derived from the execution history of a
real \( M_{i} \), the simulation will eventually find values for all
nodes, since either they were given initially or they are found
eventually as we simulate.  Thus the algorithm prints \( x \), as
required for the proof of lemma~\ref{lemma:pebspace}.

\begin{figure}
\begin{algorithm}[DECOMPRESS($d$)]
Given description \(d\) as described on p.\ \pageref{p:trickyone},
Let {\(q[1]\ldots q[t]\)} be a table of node values, initially all NULL.
Initialize all \(q[j]\)'s not pebbled in direction \(D\), as specified by description \(d\).
Simulate \(M_{i}\) in direction \(D\) starting from configuration \(C_{\tau}\), as follows:
\     To simulate a single operation of \(M_{i}\):
\          If it's a non-query operation, then
\               simulate it straightforwardly, and proceed.
\          Otherwise, it's an oracle query; examine the oracle tape.
\          If the tape is not of the form \(b\) or \(b\){\texttt{\#}}\(c\) for bit-strings \(b,c\), where \(|b|=|c|\),
\               do nothing for this operation.
\          Else, if \(|b|<\SP\), then look up \(f(b)\) in a computable table,
\               set the oracle tape appropriately, and proceed.
\          Else, if \(|b|>\SP\),then  do nothing for this operation.
\          Else, if the oracle tape is of the form \(b\), then
\               If the current step count matches some \(\Delta\tau_{j}\) in direction $D$, 
\                    then set \(q[j]=b\).
\               If \(b = q[j]\) for some \(j<t\),
\                    then set the oracle tape to \(b\){\texttt{\#}}\(q[j+1]\),
\               else, do nothing for this operation.
\          Else, if the oracle tape is of the form \(b\){\texttt{\#}}\(c\), then
\               For each \(\Delta\tau_{j}\) in direction $D$ matching the current step count,
\                    set \(q[j]\) to \(b\) or \(c\) depending on tag \(k_{j}\).
\               If \(b=q[j]\) and \(c=q[j+1]\) for some \(j\), set oracle tape to \(b\),
\               else do nothing for this operation.
\     Increment count of the number of steps simulated.
\     Continue simulating steps of $M_i$ until step count exceeds largest \(\Delta\tau_{j}\).
Print all \(q[j]\)'s.
\end{algorithm}

\caption{\label{a:decomp}Algorithm to print the incompressible chain of nodes \protect\( x\protect \)
via simulation of the reversible machine \protect\( M_{i}\protect \). }
\end{figure}

\ed{Possible additions to make the proof clearer: (1) Give pseudocode for the entire algorithm that computes \(A()\).}

\section{Beyond this proof}

In light of the work above, an obviously desirable next step would be to show
that \( \RST {\SP }{\TI }\neq \ST {\SP }{\TI } \) (and demonstrate corresponding
tight lower bounds) for a larger class of space-time functions \( \SP ,\TI  \)
in a reasonable serial model of computation \emph{without} an oracle or a black-box
input. A similar problem of following a chain of nodes may still be useful for
this. But when there is no oracle, and when the time bound is larger than the
input length \( \TI \morethan n \), there is no opportunity to specify an incompressible
chain of nodes to follow. Instead, the function \( f \) mapping nodes to their
successors must be provided by some actual computation that is specified by
the relatively short input. It may be helpful in such a proof if \( f \) is
non-invertible, or is a one-way invertible function, whose inverse might be
hard to compute. But \( f \) will still have some structure in general, and
so it may be very difficult to prove that there are no shortcuts that might
allow the result of repeated applications of \( f \) to be computed reversibly
using little time or space.

\section{Conclusion}

Although the above results are inconclusive with respect to their real-world
implications, it seems likely that reversible algorithms in the real world will
indeed in many cases require algorithmic space-time costs that exceed those
of traditional computations, by factors that are at least logarithmic and more
likely small polynomials in the cost of the original computation. 

However, this is not to say that reversible computing will never be useful.
For contexts where the cost of energy is high compared to the cost of computation,
or where the computation would benefit from a 3-D parallel architecture which
would tend be difficult to cool effectively, we have shown elsewhere that the
overall cost per performance of a partially-reversible solution may be lower
than that of a traditional irreversible design, despite the higher algorithmic
costs \cite{Frank-Knight-98}. Also, someday we might carry out \emph{quantum}
computations which would be demonstrably much \emph{more} efficient than traditional
computation on some problems, despite their reversibility \cite{Shor-95}. So,
the exact magnitude of the algorithmic cost of reversibility is still important,
because it affects the location of the optimal tradeoff points for a reversible
design, within those contexts where it \emph{is}\/ useful.

We believe that the most fruitful direction for future work in reversible computing
theory at this point is to optimize the parameters of Bennett's algorithm in
a way that minimizes the hardware cost per unit performance of parallel reversible
architectures, as a function of whatever upper bounds on power dissipation per
unit performance may arise from the requirements presented by particular application
contexts. Such analyses should take into account the asymptotic behavior of
realistic physical implementations of reversible computing; for example, there
is an additional asymptotic slowdown factor not accounted for in the present
paper which is required for the quasi-adiabatic (\ie, asymptotically reversible)
physical operation of real logic devices. Additionally, in order for the analytical
model to be useful for estimating the feasibility of real-world computer designs,
the model would also need to incorporate the specific constant factors and limits
on reversibility that would be incurred in a specific, feasible real-world reversible
technology. 

In the years since the first manuscripts of this paper were written
and circulated (circa 1997), we have been developing elements of some
practical reversible hardware technologies (\cf\
\cite{Frank-etal-98a,Vieri-etal-98,Vieri-etal-98b,Ammer-etal-99}) and
carrying out the accompanying tradeoff analysis.  The results of the
most recent (and still unpublished) work will be announced in future
reports to be presented to the computer science \& engineering
community.







\begin{acknowledgment}
Thanks are due to Michael Sipser of MIT for suggesting the use of the
incompressibility method, and to Alain Tapp and Pierre McKenzie of the
University of Montreal for helpful discussions and detailed feedback
on early drafts of this paper.
\end{acknowledgment}

\newpage

\nocite{PhysComp-92}
\bibliographystyle{unsrt}
\bibliography{shortstr,qcstr,refs,mpf,rcthy,thermcomp,moore,qc,mit,entropy,adia,circuits,physcomp,nano,tangent,xrefs}


\end{article}
\end{document}